\title{A type checking algorithm for qualified session types} 
\author{Marco Giunti  
\institute{INRIA and LIX, \'Ecole Polytechnique, France}}
\newcommand{\typesetproofs}[1]{} 
\newcommand{\I}{I}
\newcommand{\keyword}[1]{\textsf{\upshape\small #1}\xspace}
\newcommand{\PAR}{\mid}
\newcommand{\ret}{\triangleright}
\newcommand{\INACT}{\mathbf 0}
\newcommand{\SENDn}[2]{\overline{#1}\langle{#2}\rangle}
\newcommand{\SEND}[2]{\SENDn{#1}{#2}.}
\newcommand{\RECEIVEn}[2]{{#1}({#2})}
\newcommand{\RECEIVE}[2]{\RECEIVEn{#1}{#2}.}
\newcommand{\NR}[1]{(\nu #1)}
\newcommand{\sendp}{\SEND xyP}
\newcommand{\receivep}{\RECEIVE xyP}
\newcommand{\osred}{\,\rightarrow\,}                
\newcommand{\subs}[2]{[{#1}/{#2}]}
\newcommand{\fv}{\operatorname{fv}}      
\newcommand{\dom}{\operatorname{dom}}
\newcommand{\rulename}[1]{[\text{\sc #1}]\xspace}
\newcommand{\typerulename}[1]{\rulename{T-{#1}}}
\newcommand{\tvar}{\typerulename{Var}}
\newcommand{\tstrength}{\typerulename{Strength}}
\newcommand{\tout}{\typerulename{Out}}
\newcommand{\toutc}{\typerulename{OutC}}
\newcommand{\tin}{\typerulename{In}}
\newcommand{\tinc}{\typerulename{InC}}
\newcommand{\trepl}{\typerulename{Repl}}
\newcommand{\tinact}{\typerulename{Inact}}
\newcommand{\tpar}{\typerulename{Par}}
\newcommand{\tres}{\typerulename{Res}}
 \newcommand{\algrulename}[1]{\rulename{A-{#1}}}
 \newcommand{\avarl}{\algrulename{V-L}}
 \newcommand{\avarlll}{\algrulename{V-LL-l}}
 \newcommand{\avarllr}{\algrulename{V-LL-r}}
 \newcommand{\avarlsr}{\algrulename{V-L-l}}
 \newcommand{\avarlsl}{\algrulename{V-L-r}}
 \newcommand{\avaru}{\algrulename{V-U}}
 \newcommand{\avaruul}{\algrulename{V-UU-l}}
 \newcommand{\avaruur}{\algrulename{V-UU-r}}
 \newcommand{\avarusr}{\algrulename{V-U-l}}
 \newcommand{\avarusl}{\algrulename{V-U-r}}
 \newcommand{\avaruend}{\algrulename{V-EE}}
 \newcommand{\aoutu}{\algrulename{Out-Un}}
 \newcommand{\aoutclu}{\algrulename{Out-Un-l}}
 \newcommand{\aoutcru}{\algrulename{Out-Un-r}}
 \newcommand{\aoutl}{\algrulename{Out-L}}
 \newcommand{\aoutcll}{\algrulename{Out-L-l}}
 \newcommand{\aoutclr}{\algrulename{Out-L-r}}
 \newcommand{\ainu}{\algrulename{In-Un}}
 \newcommand{\ainclu}{\algrulename{In-Un-l}}
  \newcommand{\aincru}{\algrulename{In-Un-r}}
 \newcommand{\ainl}{\algrulename{In-L}}
 \newcommand{\aincll}{\algrulename{In-L-l}}
 \newcommand{\ainclr}{\algrulename{In-L-r}}
 \newcommand{\arepl}{\algrulename{
Repl}}
 \newcommand{\ainact}{\algrulename{Inact}}
 \newcommand{\apar}{\algrulename{Par}}
 \newcommand{\ares}{\algrulename{Res}}
\newcommand{\reductionrulename}[1]{\rulename{R-{#1}}}
\newcommand{\rcom}{\reductionrulename{Com}}
\newcommand{\rres}{\reductionrulename{Res}}
\newcommand{\rstruct}{\reductionrulename{Struct}}
\newcommand{\rpar}{\reductionrulename{Par}}
\newcommand{\branch}{\&}
\newcommand{\un}{\keyword{un}}
\newcommand{\lin}{\keyword{lin}}
\newcommand{\less}{\backslash}
\newcommand{\typeconst}[1]{\keyword{#1}}
\newcommand{\End}{\typeconst{end}} 
\newcommand{\OUTn}[1]{!{#1}}
\newcommand{\INn}[1]{?{#1}}
\newcommand{\OUT}[1]{\OUTn{#1}.}
\newcommand{\IN}[1]{\INn{#1}.}
\newcommand{\rcdt}{\{l_i\colon T_i\}_{i\in I}}
\newcommand{\brancht}[1][\alpha]{\branch\rcdt}
\newcommand{\recvt}{\IN TS}
\newcommand{\sendt}{\OUT TS}
\newcommand{\rect}{\mu a.S}
\newcommand{\csplit}{\cdot}
\newcommand{\dual}{\overline}
\newcommand{\cotype}[2]{({#1}, {#2})}
\newcommand{\grmeq}{\; ::= \;}
\newcommand{\map}[1]{[\![{#1}]\!]}
\newcommand{\ignore}[1]{}
\newif\ifny\nytrue
\newif\ifvv\vvtrue
\newcommand{\safe}{{\sf safe}}
\newcommand{\Un}{{\sf un}}   
\newcommand{\used}{{\sf map}}
\newtheorem{theorem}{Theorem}[section]
\newtheorem{corollary}[theorem]{Corollary}
\newtheorem{lemma}[theorem]{Lemma}
\theoremstyle{remark}
\newtheorem*{remark}{Remark}
\newtheorem*{claim}{Claim}
\begin{document}
\maketitle

\begin{abstract}
We present a type checking algorithm for establishing a session-based discipline
in the pi calculus of Milner, Parrow and Walker. Our session types are qualified
as linear or unrestricted. Linearly typed communication channels are guaranteed
to occur in exactly one thread, possibly multiple times; afterwards they evolve
as unrestricted channels. Session protocols are described by a type constructor
that denotes the two ends of one and the same communication channel. We ensure the
soundness of the algorithm by showing that processes consuming all linear
resources are accepted by a typing system preserving typings during the
computation and that type checking is consistent w.r.t. structural congruence. 
\end{abstract}

\section{Introduction}
\label{sec:intro}

Session types allow a concise description of protocols by detailing
the sequence of messages involved in each particular run of the
protocol. Introduced for a dialect of the pi
calculus~\cite{honda.vasconcelos.kubo:language-primitives,THK}, the
concept has been transferred to different realms, including functional
and object-oriented programming and operating systems; refer
to~\cite{dezani-de-liguoro:session-types-overview} for a recent
overview.

To illustrate, consider the problem of designing  a 
web system for the scheduling of meetings. 
In our example, the  system is implemented by means of a web service 
repeatedly waiting for requests to create a poll. 
Once invoked, the service instantiates a fresh session for the poll and launches
a thread for managing it. 
In the pi calculus~\cite{MPW92} the session could be modeled as a communication
channel for the exchange of the messages required by the scheduling protocol.  
The fresh  channel for the poll is forwarded back to the invoker on the channel
she has provided in order to receive the information needed for the start
of the poll: the title and a tentative date for the meeting.  
Afterwards the thread repeatedly waits for possible date
proposals from the participants of the poll.  
 \begin{align*}
P_1 =!x(y).\NR{p}(\SEND yp p({\sf title}).p({\sf date}).!p({\sf date})) 
\end{align*}
In order to have some guarantee on the behavior of the executable system,  a
static analysis of its code should be performed during the
compilation. A typed analysis  permits indeed to verify the desired
properties of the protocol, namely that there is exactly one title and at least
one date proposal for the meeting. To this aim we need to enforce that the
capability forwarded to the caller consists in (i) send  a string for
the title and afterwards (ii) send one or more dates. 
This behavior could be described by relying on  polymorphic types qualified as
linear or unrestricted.
The idea is to introduce qualifiers for  types describing a session 
and to allow a linear usage of a session to evolve to an unrestricted usage.
This approach has been indeed advocated as effective
independently from any programming language~\cite{vvBeatcs}.
A qualified session type for the poll channel sent to the invoker is the one
below.  
\begin{align*}
S_2 = \lin\,\OUT {\sf string}\lin\,\OUT{\sf date}S_4 
\qquad
S_4 =\un\,\OUT{\sf date}S_4
\end{align*}
The session type first describes the
sending of a string to set the title of the meeting; 
such usage is qualified as linear because a title for the schedule
is required. Similarly, the continuation  type for sending the date of the 
schedule is qualified as linear because a date has to be set in order to start
the poll. Lastly zero or more date proposals could be send on the poll
channel; this behavior is described by a unrestricted recursive type.
The continuation of the service $P_1$ is described by the type $S_1$ below
that could be seen as the ``dual'' of $S_2$. 
\vspace{-1.1em}
\begin{align*}
S_1 =\lin\,\IN {\sf string}\lin\,\IN{\sf date}{S_3} 
\qquad
S_3 =\un\,\IN{\sf date}S_3.  
\end{align*}
The session type describes the behavior of receiving the title and one 
or more date proposals for the schedule. The  
receiving of the title and of the date proposal  are both qualified as linear
because this information is mandatory. Eventually, zero or more date
proposals  will arrive afterwards. The unbounded behavior of receiving such
proposals is described by the unrestricted recursive type  $S_3$.
The usage of the poll channel is described by a  type constructor 
$\cotype{S_1}{S_2}$ representing the concurrent behavior of the two channel
ends~\cite{GiuntiV10}. The intuition is that in typing (the continuation of)
service $P_1$ the type  $\cotype{S_1}{S_2}$ is split into two parts: the
linear output end point is used to type the delegation of one end of the session
to the invoker while the linear input end point is used to type the
continuation process.   

While the idea of split types and contexts is clear and concise, the
inherent non-determinism contained in its formulation makes a direct
implementation infeasible. 
Algorithmic solutions for linear functional languages   avoid to split the
context into parts before checking a complex expression by passing the entire
context as input to the first subexpression and have it return the unused
portion as an output~\cite{walker:substructural-type-systems}.
In the setting of concurrent computations, the  idea  is
that when typing a parallel process $P\PAR Q$ the set of linear identifiers used
by $P$ must be calculated in order to remove it before  type checking~$Q$.  
This approach, previously  outlined for linear types of pi
calculus in~\cite{kpt:linearity}, has been implemented in the
session system of~\cite{GaySJ:substp} by representing each channel end with
a distinct identifier.  
 
In this paper, we propose an algorithm to  check protocols described
by types of the form $\cotype{S_1}{S_2}$ where each $S_i$ is a qualified session
type depicting one end of the communication.
Channels could evolve from linear to unrestricted usage. 
Reasoning at the type level, we do implement split by forbidding the utilization of
used parts of types and by a careful analysis of qualifiers.
This construction permits us to show that
(i) type checked processes  are accepted by a typing system satisfying subject
reduction and that (ii) type checking preserves structural congruence.    

More in detail, type checking  relies on the definition of several unambiguous
patterns. The patterns for linear input and  output processes do
return a {\it marked} context. In the body of the function a recursive call to
type check the continuation is launched. If an exception is not raised, this
call returns in output a context. First, to ensure a subsequent linear usage to
be finished within the continuation we verify the type for the
variable in the context to be unrestricted. Second, to prohibit the use of the
variable in the next thread we return a context with an  ``unusable'' mark for
the type of the variable. Similarly, in delegating a channel end of a session we
pass to the checking function for the continuation a context with an unusable
mark for the delegated type. Under replication, we do no admit to return
new typings marked as unusable, which would imply consumption of a linear resource.
Lastly, the algorithm succeeds if the context returned by the top-level call of
the type checking function does not contain linear types.  

The remainder of the paper is as follows.
In Section \ref{sec:pi-calculus} we introduce session types and pi calculus.
Section \ref{sec:algorithm} presents the type checking algorithm.
Section \ref{sec:soundness} is devoted to establish the soundness of our approach.
In the last part of the section we  investigate the expressiveness of the
algorithm.
Some examples of the concrete execution of the algorithm are illustrated in
Section \ref{sec:examples}.
We conclude in Section \ref{sec:discussion} by discussing limitations and
future work.
  
\section{Pi calculus} 
\label{sec:pi-calculus}
\begin{figure}[t]
  \emph{Types and Processes}
  \begin{align*} 
    q  \grmeq && \text{Qualifiers:} && T \grmeq &&\text{Types:}
    \\
    & \lin & \text{linear} 	&& & 	S &  \text{end point}
    \\
    & \un & \text{unrestricted}&&  & \cotype{S}{S}  & \text{channel}
    \\ 
    p \grmeq & & \text{Pre Types:} && P  \grmeq  & & \text{Processes:}
    \\ 
    & \recvt & \text{receive} &&&  \SEND xyP & \text{output}
    \\
    & \sendt & \text{send}	&&& \RECEIVE xyP & \text{input}
    \\
     &\End &\text{termination} &&& P\PAR P &
\text{composition}
     \\
    S \grmeq & & \text{End Point Types:} &&&
\NR{x:T}P & \text{restriction}
    \\ 
    & q\ p & \text{qualified channel} &&& !P &
\text{replication}
    \\ 
    & a & \text{type variable} &&& \INACT &
\text{inaction}\\
  & \rect & \text{recursive type}  
    \end{align*}
  \\[-1.5em]
  \emph{Rules for structural congruence}
  \begin{gather*}
    P\PAR Q \equiv Q\PAR P
    \qquad
    (P\PAR Q)\PAR R \equiv P\PAR (Q\PAR R)
    \qquad
    P\PAR \INACT\equiv P
    \qquad
    !P \equiv P \PAR !P
    \\
    \NR {x:T} P\PAR Q \equiv \NR {x:T}(P\PAR Q)
    \qquad 
     \NR {x:T_1}\NR {y:T_2}P\equiv \NR {y:T_2}\NR {x:T_1}P
    \\
    \NR {x:\un\,p}\INACT \equiv \INACT
    \qquad
       \NR{x:\cotype{\un\,p_1}{\un\,p_2}}\INACT \equiv \INACT
  \end{gather*}
  \emph{Rules for reduction}  
  \begin{gather*}
    \tag*\rcom
    \SEND x z P \PAR \RECEIVE x y Q  
    \osred
    P \PAR Q\subs z y 
    \\
    \tag*{\rres\rpar\rstruct}
    \frac{
      P \osred  Q
    }{
      \NR{x}P \osred \NR{x}Q
    }
    \qquad
    \frac{
      P \osred  Q
    }{
      P \PAR R \osred Q \PAR R
    }
    \qquad
    \frac{
      P \equiv P' \quad P' \osred Q' \quad Q' \equiv Q
    }{
      P \osred Q
    }
  \end{gather*}
  \caption{Pi calculus}
  \label{fig:syntax}
\end{figure}

This section introduces the syntax and the semantics of the typed pi calculus.
The definition is in \mbox{Figure~\ref{fig:syntax}}. 
We consider channel types of the form
$\cotype{S}{S}$ where $S$ is a  type  describing the behavior
of a channel end point.
An end point type $S$ can be a pre type qualified with \lin or \un, a
recursive type or a type variable.
Each qualifier in a type controls
the number of times the channel can be used at that point: exactly
once for \lin; zero or more times for \un.
A pre type of the form $\sendt$ describes a channel end able to send a
variable of type $T$ and to proceed as prescribed by $S$. Similarly, pre
type $\recvt$ describes a channel end able to receive a variable of type
$T$ and continue as~$S$.
Pre type $\End$ describes a channel end on which no further
interaction is possible.
For recursive (end point) types we rely on a set of type variables,
ranged over by~$a$. Recursive types are required to be contractive,
that is, containing no subexpression of the form $\mu a_1\dots\mu
a_n.a_1$.
%
%
\emph{Type equality} is not syntactic. Instead, we define it as the
equality of regular infinite trees obtained by the infinite unfolding
of recursive types, \emph{modulo pair commutation}. 
The formal definition, which we omit, is co-inductive. 
In this way we use types
$\cotype{\mu a.\lin\OUT{\un\,\End}\lin\IN{\un\,\End} a}{{\un\,\End}}$ and
$\cotype
{\un\,\End}{\lin\OUT{\un\,\End}\mu b.\lin\IN{\un\,\End}\lin\OUT{\un\,\End} b}$
interchangeably, in any mathematical context.
This allows us never to consider a type $\rect$ explicitly (or $a$ for
that matter). Instead, we pick another type in the same equivalence
class, namely $S\subs\rect a$. If the result of the process turns out
to start with a $\mu$, we repeat the procedure. Unfolding is bound to
terminate due to contractiveness.
In other words, we take an equi-recursive view of
types~\cite{PierceBC:typpl}.
 
%
%
The syntax and the semantics of pi calculus processes are those
of~\cite{MPW92} but for restriction, for
which we require type annotation.
This is only to facilitate type checking and has no impact on the semantics.
We rely on a set of variables, ranged over by $x,y,z$.  
For processes we have (synchronous, unary) output and input, in the forms 
$\SEND xyP$ and $\RECEIVE xyP$,
as well as a parallel composition, annotated scope restriction, replication and
the terminated process. 
%
%
The binders for the language appear in parenthesis: $x$ is bound in
both $\RECEIVE yxP$ and $\NR {x:T}P$. Free and bound variables in
processes are defined accordingly, and so is alpha conversion,
substitution of a variable $x$ by a variable $z$ in a process~$P$,
denoted $P\subs zx$. We follow Barendregt's variable convention,
requiring bound variables to be distinct from each other and from free variables  
in any mathematical context.

Structural congruence is the smallest relation on processes including
the rules in Figure~\ref{fig:syntax}. The first three rules say that parallel
composition is commutative, associative and has $\INACT$ as neutral
element. The last rule on the first line captures the essence of
replication as an unbounded number of identical processes.  The rules
in the second and third line deal with scope restriction. The first, scope
extrusion, allows the scope of $x$ to encompass $Q$; due to variable
convention, $x$ bound in $\NR {x:T}P$, cannot be free in $Q$.  
The next rule allows exchanging the order
of restrictions.
The rules on the third line state that restricting over a terminated process has
no effect.  Since it makes poor sense to declare a new variable with a linear type
for a terminated process, we require the type annotation to be unrestricted. 
The reduction  is the smallest relation on processes
including the rules in Figure~\ref{fig:syntax}. The \rcom rule
communicates a variable $z$ from an output prefixed one $\SEND xzP$ to an
input prefixed process $\RECEIVE xyQ$; the result is the parallel
composition of the continuation processes, where the bound variable
$y$ is replaced by the variable $z$ in the input process.  
The rules on the last line allow
reduction to happen underneath scope restriction and parallel
composition, and incorporate structural congruence into reduction.

\section{Type checking algorithm}
\label{sec:algorithm}
In this section we present an algorithm for type checking a pi calculus
process given a typing context. 
Type checking  relies  on the definition of several
patterns which, for the sake of clarity, we  present in a declarative style.
Lastly, in Figure~\ref{fig:algorithm} we present an excerpt of the {\it ML}
implementation. 

\medskip\noindent
{\bf Contexts}.
 We let $\Gamma$ be a map from variables to  types and 
the {\it void} symbol, noted $\circ$;  a void symbol  permits to mark an end point 
as unusable.
\begin{align*}
M,N,O &\grmeq S \PAR \circ   && \text{entry}
\\
\Gamma &\grmeq \emptyset \PAR \Gamma,x:M \PAR \Gamma,x:\cotype{M}{N} &&\text{context}
\end{align*} 
\noindent
Context updating, noted $\uplus$,  is the procedure effected by the typing
system to  transform a void entry in an end point entry:
$
\Gamma,x:\circ\, \uplus x:M =\Gamma,x:M 
$. 
A {\it safe} context is a map from variable to safe entries; 
we let the predicate $\safe(\Gamma)$  hold whenever $x\in\dom(\Gamma)$ implies
$\safe(\Gamma(x))$.
A linear channel type  is safe if (i) the type of the
variable sent
in output corresponds to the type expected in input and (ii) the expected type
for the input is safe and (iii) the continuation is safe. 
For an unrestricted channel type we require (i) and (ii):  (iii) will be enforced by
the type system. 
\begin{align*}
&\safe(M)
\\
&\safe(\cotype{M_1}{M_2}) \hspace{2cm} 
\exists i\in\{1,2\}.\ M_i=\circ,\un\,\End 
\\
&\safe(\cotype{\lin\, \IN TS_1}{\lin\,\OUT TS_2})  =
\safe(T)\land \safe(\cotype{S_1}{S_2})
\\
&\safe(\cotype{\un\, \IN TS_1}{\un\,\OUT TS_2})  =
\safe(T) 
\end{align*}
\noindent
A context is {\it unrestricted} if it contains only
unrestricted or void entries.  We let $\Un(\Gamma)$ whenever $x\in\dom(\Gamma)$
implies $\Un(\Gamma(x))$.
\begin{align*} 
\hspace{0.7cm}&\Un(\circ) 
\\ 
&\Un( \un\,p )
\\
&\Un(\cotype{M}{N})= \Un(M)\,\land\,\Un(N)   
\end{align*}

\medskip\noindent
{\bf Patterns}.
We present typing rules for processes of the form $\Gamma_1\vdash
P\ret\Gamma_2$ where   $\Gamma_1$ is a context received in input
and $\Gamma_2$ is a context produced as output.
Given that $\Gamma_1$ is a context such that $\safe(\Gamma_1)$, the
rules are chosen deterministically by inspecting (i) the shape of the
context and (ii) the shape of the process, in the following way.  
Each rule is implemented as a pattern of a function with
signature $\sf check (g:context,p:process):context$. 
For each function call with a safe context parameter, zero or
one pattern does match; in the first case
a pattern exception indicating the reject of the process is raised  while
in the second case a context is returned in output to the caller. 
The rules for variables have the form
$\Gamma_1\vdash v:T\ret\Gamma_2$ and are implemented as patterns of a function
with signature {\sf checkVar (g:context, v:var):context }.
In the rules below the output context is
obtained by setting to void the linear assumptions used to type the variable. 
The last three rules permit to resolve any ambiguity in typing an unrestricted end
point type with an unrestricted  channel type.
\begin{gather*}  
    \tag*{\avarl,\avaru}
    \frac{
    \Gamma=\Gamma_1,x\colon \lin\,p,\Gamma_2
    }{
    \Gamma\vdash x \colon  \lin\,  p 
    	\ret\Gamma_1,x \colon \circ,\Gamma_2 
    }
    \qquad
    \frac{
    \Gamma=\Gamma_1,x\colon  \un\,p ,\Gamma_2
    }
    {
    \Gamma\vdash x \colon  \un\,p  \ret\Gamma
    }
    \\[2mm] 
    \tag*{\avarlll}
    \frac{\Gamma=
    \Gamma_1,x\colon \cotype{ \lin\,p_1 }{ \lin\,p_2 },\Gamma_2
    }
    {
    \Gamma\vdash 
    x \colon \cotype{ \lin\,p_1 }{ \lin\,p_2} 
    \ret\Gamma_1,x\colon\cotype{\circ}{\circ},\Gamma_2
    }
    \\
    \tag*{\avarllr}
    \frac{\Gamma=
    \Gamma_1,x\colon \cotype{ \lin\,p_1 }{ \lin\,p_2 },\Gamma_2
    }
    {
    \Gamma\vdash 
    x \colon \cotype{ \lin\,p_2 }{ \lin\,p_1} 
    \ret\Gamma_1,x\colon\cotype{\circ}{\circ},\Gamma_2
    }
    \\[2mm] 
    \tag*{\avarlsl,\avarlsr}
    \frac{
    \Gamma=\Gamma_1,x\colon\cotype{\lin\,p  }{N},\Gamma_2
    }{
     \Gamma\vdash x \colon \lin\, p
    	\ret\Gamma_1,x\colon \cotype{\circ}{N},\Gamma_2
    }
    \qquad 
    \frac{
    \Gamma=\Gamma_1,x\colon\cotype{M}{ \lin\,p },\Gamma_2
    }{
     \Gamma\vdash x \colon \lin\, p 
    	\ret\Gamma_1,x\colon \cotype{M}{\circ},\Gamma_2
    } 
    \\[2mm] 
    \tag*{\avaruul,\avaruur}
    \frac{\Gamma=\Gamma_1,x\colon\cotype{ \un\,p_1 }{ \un\,p_2 },\Gamma_2
    }
    {\Gamma\vdash x \colon \cotype{\un\,p_1}{\un\,p_2} \ret\Gamma 
    }
    \qquad
    \frac{\Gamma=\Gamma_1,x\colon\cotype{ \un\,p_1 }{ \un\,p_2 },\Gamma_2
    }
    {\Gamma\vdash x \colon \cotype{\un\,p_2}{\un\,p_1} \ret\Gamma 
    }
    \\[2mm]   
    \tag*{\avarusr,\avarusl}
    \frac{
    \Gamma=\Gamma_1,x\colon \cotype{ \un\,p }{N},\Gamma_2 \quad \un\,p\ne N
    }
    {
    \Gamma\vdash x \colon \un\, p \ret\Gamma
    }
    \qquad
            \frac{ \Gamma=\Gamma_1,x\colon \cotype{M}{ \un\,p },\Gamma_2  \quad
\un\,p\ne M
    }
    {
\Gamma\vdash x \colon \un\, p \ret\Gamma
    	}
    \\[2mm]
    \tag*{\avaruend}
    \frac{ \Gamma=\Gamma_1,x\colon \cotype{\un\,\End}{\un\,\End},\Gamma_2
     }
    {
\Gamma\vdash x \colon \un\, \End \ret\Gamma
} 
\end{gather*}

\noindent
Rule \aoutl is to type processes sending variables on a channel used in
linear mode given that the type for the channel in the context is an   end point.  
The context changed by setting the channel to void is used to check the  sent
variable at the expected type and in turn to return a new context. 
The new context updated with the continuation type for the linear
channel is passed as parameter in the call for checking the continuation
process. To  ensure a linear use of the channel to be finished within the
continuation, we verify that the 
context returned by the call for the continuation does contain an unrestricted
typing for the channel. Finally, the returned context is
given as output with the typing for the channel set to void.

\begin{gather*}
    \tag*{\aoutl}\frac{
      \Gamma_1,x:\circ\,\vdash y \colon T\ret\Gamma_2 
      \qquad
      \Gamma_2\uplus  x\colon S \vdash P\ret\Gamma_3,x:M 
      \qquad \Un(M)  
    }{
      \Gamma_1,x\colon  \lin\,\OUT TS \vdash \sendp\ret\Gamma_3,x:\circ\  
    }
\end{gather*}

\noindent
Rules \aoutcll,\aoutclr are used when the entry for the linear output
in the context is a channel type. The rules are implemented by the pattern
\aoutl. In returning the context we set one end of the channel type to void
while we leave the other end as it has been received in input. 
 
\begin{gather*}     
   \tag*{\aoutcll}
    \frac{
      \Gamma_1,x\colon { \lin\,\OUT TS} 
      \vdash \sendp\ret\Gamma_3,x: {\circ}  
     }{
      \Gamma_1,x\colon \cotype{ \lin\,\OUT TS}{N}
      \vdash \sendp\ret\Gamma_3,x:\cotype{\circ}{N}
    }
    \\[2mm]
       \tag*{\aoutclr}
    \frac{
       \Gamma_1,x\colon { \lin\,\OUT TS} 
      \vdash \sendp\ret\Gamma_3,x: {\circ}  
     }{
      \Gamma_1,x\colon \cotype{M}{ \lin\,\OUT TS}
      \vdash \sendp\ret\Gamma_3,x:\cotype{M}{\circ}
    } 
\end{gather*}

\noindent
For sending a variable on an unrestricted channel we require the sent
variable to be typable by the same context received in input; that is, the type
for the unrestricted output channel must be recursive. 
The context obtained by the typing for the variable is then used to
call the checking function for the continuation process.   
 
\begin{gather*}
 \tag*{\aoutu}
    \frac{
      \Gamma_1,x: S \vdash  v \colon T\ret\Gamma_2 
      \qquad
      \Gamma_2\vdash P\ret\Gamma_3 
      \qquad
      S=\un\,\OUT TS 
    }{
      \Gamma_1,x\colon  \un\,\OUT TS \vdash \sendp\ret\Gamma_3
    }
    \qquad  
 \\[2mm]
 \tag*{\aoutclu}
    \frac{
      \Gamma_1,x\colon\cotype{ S }{N}\vdash v \colon
T\ret\Gamma_2 
      \qquad
      \Gamma_2\vdash P\ret\Gamma_3
      \qquad
      S= \un\,\OUT TS 
    }{
      \Gamma_1,x\colon \cotype{ \un\,\OUT TS }{N}\vdash
\sendp\ret\Gamma_3
    }
    \\[2mm]
\tag*{\aoutcru}    
    \frac{
      \Gamma_1,x\colon\cotype{M}{ S }\vdash v \colon
T\ret\Gamma_2 
      \qquad
      \Gamma_2\vdash P\ret\Gamma_3 
      \qquad 
      S=\un\,\OUT TS
    }{
      \Gamma_1,x\colon \cotype{M}{ \un\,\OUT TS }\vdash
\sendp\ret\Gamma_3
    }
\end{gather*}
 
\noindent
To type a linear usage of an input we require the expected type to
agree with the type of the input channel and the continuation type for the channel to
 be consumed within the continuation. This
is implemented by requiring that the context returned by the call for the
continuation does map the variable to an unrestricted type. 
We also require a linear usage for the variable bound by the input to be
finished within its local scope. Lastly, the call returns a context (i) with the
type for the linear variable set to void and (ii) pruned by the variable bound
by the input prefix. This is the rationale of rule  \ainl  
 and or rules \aincll,\ainclr which are used for variables having
respectively an end point or a channel type. 

\begin{gather*}    
\tag*{\ainl}
    \frac{
      \Gamma_1, x\colon S, y\colon T\vdash P\ret\Gamma_2,x:M,y:O
      \qquad \Un(M)  
      \qquad \Un(O)
      }{
      \Gamma_1,x\colon \lin\IN TS\vdash \receivep\ret\Gamma_2,x:\circ
    }
     \\[2mm]
\tag*{\aincll}
    \frac{
      \Gamma_1,x\colon {\lin\IN TS}
      \vdash \receivep\ret\Gamma_2,x: {\circ} 
     }{
      \Gamma_1,x\colon \cotype{\lin\IN TS}{N}
      \vdash \receivep\ret\Gamma_2,x:\cotype{\circ}{N} 
    }
    \\[2mm]
\tag*{\ainclr}    
    \frac{
          \Gamma_1,x\colon {\lin\IN TS}
      \vdash \receivep\ret\Gamma_2,x: {\circ} 
     }{
      \Gamma_1,x\colon \cotype{M}{\lin\IN TS}
      \vdash \receivep\ret\Gamma_2,x:\cotype{M}{\circ} 
    }
\end{gather*}

\noindent
The rules for unrestricted input take the context
received in input and add the bound variable at the expected type in order to
type the continuation. The context returned by the call of the
checking function for the continuation needs to be first verified to ensure that
the type for the bound variable is unrestricted, and then pruned by the
variable to be returned in output.

\begin{gather*}  
 \tag*{\ainu}
    \frac{
      \Gamma_1,x\colon S,y\colon T \vdash P\ret\Gamma_2,y:O   
      \qquad\Un(O)
      \qquad S=\un\IN TS}
      {
      \Gamma_1,x\colon  \un\IN TS \vdash  \receivep\ret\Gamma_2
      }
    \\[2mm]
\tag*{\ainclu}
     \frac{ 
      \Gamma_1,x\colon\cotype{S}{N},y\colon T
      \vdash P\ret\Gamma_2,y:O 
      \qquad \Un(O)
      \qquad S=\un\IN TS
      }{
      \Gamma_1,x\colon\cotype{ \un\IN TS }{N}\vdash \receivep\ret\Gamma_2
    }
\\[2mm]
\tag*{\aincru}
   \frac{   
     \Gamma_1,x\colon\cotype{M}{S},y\colon T
                 \vdash P\ret\Gamma_2,y\colon O  
     \qquad \Un(O)
     \qquad S=\un\IN TS}
     {
      \Gamma_1,x\colon\cotype{M}{\un\IN TS}\vdash \receivep\ret\Gamma_2 
     }
\end{gather*}

\noindent
To type an inert process by using \ainact any context suffices; the
context received in input is forwarded in output. To type
a parallel process in \apar we check the first thread with the  context received
in input. This operation returns in output a  context that is used to type-check
the next thread. The context returned by the last typing is forwarded in
output. While imposing an order on parallel processes could appear  restrictive, in
Section~\ref{sec:soundness} we will show that the chosen order makes no difference. 
 
\begin{gather*}
 \tag*{\ainact,\apar}
     \Gamma \vdash  \INACT\ret\Gamma
 \qquad
 \frac{
    \Gamma_1\vdash P\ret\Gamma_2
    \qquad
    \Gamma_2\vdash Q\ret\Gamma_3
    }{
    \Gamma_1\vdash P \PAR Q \ret\Gamma_3
    }
 \end{gather*}

\noindent
In order to type a process generating a new channel, in rule \ares we
require
the typing for the channel to be safe; if it is not, the algorithm stops and an
  exception is raised. Similarly to the input cases, if a linear usage is
prescript for the new variable then it must be finished within its scope. 

\begin{gather*}
\tag*{\ares}
    \frac{
      \safe (T)
      \quad
      \Gamma_1,y\colon T \vdash  P\ret\Gamma_2,y:O 
      \quad 
      \Un(O)
     }{
       \Gamma_1 \vdash  \NR{y\colon T}P\ret\Gamma_2
    }
    \qquad 
\end{gather*}

\noindent
The rule for replication \arepl is below.  
In the call for checking the process under the replication 
we require the context returned in output to be equal to the one received in
input. Indeed, a change in the output context would be obtained by introducing
a void symbol  indicating that  a linear resource has been consumed. 
This must clearly be forbidden under replication. 
On contrast, we allow to return  linear entries in order to type check the next
thread.

\begin{gather*}
 \tag*{\arepl}
    \frac{
        \Gamma_1 \vdash  P\ret\Gamma_2
        \quad
        \Gamma_2=\Gamma_1  
    }{
      \Gamma_1 \vdash  !P\ret\Gamma_2
    }  
\end{gather*} 

\begin{lemma}
\label{lem:monotonic}
If $\safe(\Gamma_1)$ and $\Gamma_1\vdash P\ret\Gamma_2$ then
$\dom(\Gamma_2)=\dom(\Gamma_1)$ and $\safe(\Gamma_2)$. 
\end{lemma}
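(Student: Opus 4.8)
The plan is to proceed by induction on the derivation of $\Gamma_1 \vdash P \ret \Gamma_2$, or equivalently by structural induction on $P$ with a simultaneous statement covering the variable judgements $\Gamma_1 \vdash v : T \ret \Gamma_2$. For the variable judgements the claim is that $\dom(\Gamma_2) = \dom(\Gamma_1)$ and $\safe(\Gamma_2)$, which is immediate by inspecting the rules \avarl through \avaruend: each one either returns $\Gamma_1$ unchanged, or returns a context differing from $\Gamma_1$ only by setting the entry for a single variable $x$ to $\circ$ or to $\cotype{\circ}{N}$, $\cotype{M}{\circ}$, $\cotype{\circ}{\circ}$. In every case the domain is preserved, and since $\safe(\circ)$ holds and $\safe$ of a pair $\cotype{M_1}{M_2}$ only depends on the components (and the clause $\exists i.\ M_i = \circ, \un\,\End$ is satisfied when one component is $\circ$), safeness is preserved. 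I expect to record this as a preliminary observation: \emph{any context obtained from a safe context by setting some end-point slot(s) to $\circ$ is again safe}, and dually that $\uplus$ applied to a safe context with a safe replacement type yields a safe context.

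Next I would treat the process cases. The base cases \ainact is trivial since $\Gamma_2 = \Gamma_1$. For the output rules \aoutl, \aoutcll, \aoutclr, \aoutu, \aoutclu, \aoutcru: the premises chain contexts $\Gamma_1 \to \Gamma_2 \to \Gamma_3$ through a variable judgement and a recursive call on the continuation $P$. For each link I invoke the induction hypothesis — but I must first check that the intermediate context fed to each premise is itself safe, so that the hypothesis applies. This is where the preliminary observation is used: $\Gamma_1, x:\circ$ is safe when $\Gamma_1, x : \lin\OUT TS$ is (clearing a slot), and $\Gamma_2 \uplus x : S$ is safe because $\safe(\Gamma_1, x:\lin\OUT TS)$ unfolds to give $\safe(S)$ (via the safe-pair clauses, together with the fact that a safe bare end point $\lin\OUT TS$ forces $\safe(S)$ — here I may need to lean on how $\safe$ on end points is defined, which the excerpt states only for channel types, so I would assume the natural reading that $\safe(M)$ holds for all end-point entries $M$). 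Then domain equality composes transitively across the three premises, and the final context $\Gamma_3, x:\circ$ (resp. $\Gamma_3, x:\cotype{\circ}{N}$, etc.) has the same domain as $\Gamma_1$ and is safe by the observation. The input cases \ainl, \aincll, \ainclr, \ainu, \ainclu, \aincru are analogous, with the extra wrinkle that the premise context extends $\Gamma$ with a fresh bound variable $y:T$ and the conclusion prunes $y$; I need $\safe(T)$ to know the extended context is safe — for the linear input rules this follows from $\safe$ of the channel type $\lin\IN TS$ (the clause gives $\safe(T)$), and for the restriction rule \ares it is an explicit premise. Pruning a variable from a safe context with matching domain equation is routine bookkeeping. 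The \apar case is a plain two-step transitivity of the induction hypothesis. The \arepl case is immediate because the rule forces $\Gamma_2 = \Gamma_1$.

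The main obstacle I anticipate is not any single case but the threading discipline: at every premise I must verify that the context handed in is safe before I am entitled to apply the induction hypothesis, and this forces me to extract, from $\safe$ of the principal type in the conclusion, the safeness of its sub-components ($T$, $S$, the surviving pair component $N$ or $M$). Getting the definition of $\safe$ on end-point types pinned down precisely — in particular that clearing a slot, updating a void slot with a safe type, and extracting continuations all preserve safeness — is the crux; once that "safety is preserved under the context operations used by the rules" lemma is isolated, every case of the induction reduces to composing domain equalities and invoking it. I would therefore structure the proof as: (1) the preliminary safety-preservation facts for $\circ$-insertion, $\uplus$, variable pruning, and continuation extraction; (2) the variable-judgement cases; (3) the process cases by induction, each a mechanical combination of (1), (2), and the induction hypothesis.
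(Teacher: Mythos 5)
Your overall strategy---induction on the derivation, with a preliminary observation that voiding slots preserves safety---is reasonable, but as planned it breaks at exactly the point you flag as uncertain. Under the paper's definition, $\safe(M)$ holds \emph{unconditionally} for every single entry $M$ (end point or $\circ$); the clause yielding $\safe(T)$ for the carried type applies only to channel pairs of the shape $\cotype{q\,\IN T{S_1}}{q\,\OUT T{S_2}}$. Consequently $\safe(\lin\IN TS)$ tells you nothing about $T$, and in the premise of \ainl the extended context $\Gamma_1,x\colon S,y\colon T$ need not be safe: take, for instance, $T=\cotype{\un\IN {\un\,\End}{S'}}{\un\IN {\un\,\End}{S''}}$, which is unrestricted (so the side condition $\Un(O)$ can still be met) but is not safe. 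Your induction hypothesis, which requires safety of the input context, is then inapplicable to that premise, so you cannot conclude $\safe(\Gamma_2)$ in the conclusion of \ainl (and likewise for \ainu and the derived channel-type input rules). The threading discipline you yourself identify as the crux is precisely what fails.

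The repair is to strengthen the invariant so that safety of intermediate contexts is never needed: prove by induction on the derivation, with \emph{no} safety hypothesis, that $\Gamma_2$ restricted to $\dom(\Gamma_1)$ coincides with $\Gamma_1$ except that some end-point components are replaced by $\circ$ (variables added to the context in premises being pruned again in conclusions). This holds for every rule --- the variable rules only void components; the slot refilled by $\uplus$ in \aoutl is voided again on output; \arepl and \ainact leave the context untouched; \apar composes two such steps --- and it immediately gives $\dom(\Gamma_2)=\dom(\Gamma_1)$. Safety of $\Gamma_2$ then follows from $\safe(\Gamma_1)$ as a corollary of your own preliminary observation: replacing a component of a pair by $\circ$ preserves safety via the clause $\exists i.\ M_i=\circ,\un\,\End$, and single entries are always safe. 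Everything else in your plan (the variable cases, transitivity in \apar, the explicit $\safe(T)$ premise in \ares) is fine.
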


\afterpage{\clearpage}
\begin{figure}[!ht]
\begin{footnotesize} 
\setbox0\vbox{\hsize=1\textwidth
\begin{verbatim} 
datatype qualifier = Lin | Un;
datatype preType   = In of sessionType * endpointType | Out of sessionType * endpointType | End
and   endpointType = Qualified of qualifier * preType | Void 
and   sessionType  = EndPoint of endpointType | Channel of endpointType * endpointType; 
type context       = (string * sessionType) list;    
datatype process   = Zero | Replication of process | Parallel of process * process 
                   | Input of string * string * process  | Output of string *  string * process 
                   | New of string * sessionType * process;   
fun safe           (g:context):context;
fun unVar          (g:context,v:string):context;
fun remove         (g:context,v:string):context;
fun setVoid        (g:context,v:string):context;   
fun checkVar       ((x, ((EndPoint (Qualified (Lin,p) ))))::g, 
                    (z, ((EndPoint (Qualified (Lin,r) ))))) =                       
                      if p=r then (x, (EndPoint Void))::g                           (* A-V-L *) 
 |  checkVar       ((x, ((EndPoint (Qualified (Un,p) ))))::g,   
                    (z, ((EndPoint (Qualified (Un,r) ))))) =                         
                      if p=r then ((x, ((EndPoint  ( Qualified (Un,p) ))))::g)      (* A-V-U *)
 |  checkVar       ((x, ((Channel  (Qualified (Lin,p), Qualified (Lin,s) ))))::g,  
                    (z, ((Channel  (Qualified (Lin,r), Qualified (Lin,t) )))))=  
                      if p=r andalso s=t then 
                        ((x, ((Channel  ( Void, Void ))))::g)
                      else
                        if p=t andalso s=r 
                          then ((x, ((Channel  ( Void, Void ))))::g)           (* A-V-LL-L+R *)  
 |  checkVar       ((x, ((Channel  (Qualified (Lin,p), Void ))))::g,   
                    (z, ((EndPoint (Qualified (Lin,r)))))) =    
                      if p=r   
                          then  ((x, ((Channel ( Void,Void ))))::g)               (* A-V-L-R *)
 |  checkVar       ((x, ((Channel  (Qualified (Lin,p), Qualified (Un,s) ))))::g,   
                    (z, ((EndPoint (Qualified (Lin,r)))))) =   
                      if p=r  
                          then  ((x, ((Channel ( Void, Qualified (Un,s)))))::g)   (*A-V-L-R  *) 
(*                | A-V-L-L   | A-V-UU-L  | A-V-U-L  | A-V-U-R    | A-V-E-E                  *); 
fun check          (g:context,Zero:process)=                      
                      g                                                           (* A-INACT *)
 |  check          (g:context,Replication p)=         
                      if g = check(g,p) 
                         then g                                                   (* A-REPL  *)
 |  check          (g:context, Parallel (p1,p2))  =   
                      check ( check (g,p1) , p2 )                                 (* A-PAR   *)                                    
 |  check          ((z,(Endpoint (Qualified (Lin,In (a,c)) )))::t, Input (x,y,p))=       
                      let val d = check ((x,Endpoint (c))::t,p) in 
                        setVoid ( remove (unVar (unVar  (d,x), y), y) , x )
                      end                                                         (* A-IN-L  *)
  | check          (g:context,New (x,t,p) )  =    
                      remove ( unVar ( check ( ( safe([(x,t)]))@g, p) , x ) , x)  (* A-RES   *)
(*                | A-IN-L-l  | A-IN-L-r  | A-IN-Un  | A-IN-Un-l  | A-IN-Un-r | A-OUT-L
                  | A-OUT-L-l | A-OUT-L-r | A-OUT-Un | A-OUT-Un-l | A-OUT-Un-r               *);
fun typeCheck      (g:context,p:process)  =   
                      un ( check ( safe(g) , p ) );                         
\end{verbatim}
} 
\fbox{\box0}
\end{footnotesize}
\caption{ML code of the algorithm (excerpt)}
\label{fig:algorithm}
\end{figure}

  
\medskip\noindent
{\bf Type checking}.
Having defined typing rules corresponding to patterns of the
checking function, we devise an
algorithm for  establish a session-based type discipline.
Figure~\ref{fig:algorithm} presents the   ML  definition for types, processes and
the type checking function. Type {\sf context} associates variables
to entries which are formed apart 
the end point and the channel type. 
The function {\sf safe}  returns in output the same context received in input
whenever the context satisfies the safe predicate, otherwise
it generates an exception. 
Function {\sf unVar} takes as  parameters a context and a
variable and verifies that the type for the variable in the context 
is unrestricted; in this case the context is returned in output, otherwise 
an exception is raised. 
Functions {\sf remove} and {\sf setVoid} do
perform the required operations and return the updated context. 
We also need auxiliary functions to push and pop entries
to and from the context stack; we omit all the details.
  
The {\sf check} function, the kernel of the type checking procedure, is
defined
by the union of the patterns for the rules introduced in the current section.
In order to illustrate the   mechanism, we draw the translation of some
patterns. 
In patterns for variables and in  \ainl  we assume the variable on the top of
the
context $z$ to be equal to the variable $x$ respectively for the value to type and
for the input prefix of the process. 
The {\sf checkVar} function is called in patterns for output in order
to type the sent variable and obtain in output a context to pass together with the
the continuation to the checking function. 
In \ainl we launch the recursive call of the check function by passing as parameters
the updated context and the continuation process. After checking that the type for
both channel 
 $x$ and the variable bound by the input are unrestricted
in the returned context, we return the context with the type for $x$ set to void. 
In the pattern for \ares we launch the {\sf check} function by passing as parameters 
the context with the new entry and the
continuation process.
The inner call of the safe function immediately raises an
exception if the type for the bound variable is not safe.
Lastly,  we first control that in the returned context the variable is unrestricted
 and then we return the context pruned by the variable.  
The algorithm is implemented by the {\sf typeCheck} function.
The function receives in input a context and a process.
If the context received in input is not safe then 
the function exits immediately.
Otherwise, a context is returned in input  
provided that an exception has not been raised. 
The exception could raise 
(i) when no pattern matching is possible for the chosen derivation or
(ii) when a call of the {\sf safe} function in \ares fails or
(iii) when  call of  {\sf unVar} function fails. 
Since the choice of patterns is deterministic for safe contexts, no
backtracking is needed.
Lastly, the process is accepted by the algorithm whenever the returned context  
satisfies the {\sf un} predicate defined in Section~\ref{sec:pi-calculus}. 

\begin{lemma} 
\label{lem:deterministic}
If  $\safe(\Gamma)$ then {\sf check}$(\Gamma,P)$
matches zero or one patterns.
\end{lemma}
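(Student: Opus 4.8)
The plan is to proceed by a case analysis on the shape of the process $P$, showing that within each syntactic class of processes the side conditions attached to the rules are mutually exclusive, so that a safe context $\Gamma$ selects at most one pattern. First I would observe that the patterns are grouped by the top-level constructor of $P$: the cases $\INACT$, $P_1\PAR P_2$, $!P'$, $\NR{y:T}P'$ each have a single pattern (\ainact, \apar, \arepl, \ares respectively), so determinism is immediate there; and for the variable judgement the same case analysis applies to the shape of the value $v$ together with the entry $\Gamma(x)$. The only real work is in the input and output cases, where several patterns share the constructor but differ on (a) whether the entry for the subject channel $x$ is an end point $M$ or a channel pair $\cotype{M}{N}$, and (b) within the channel case, which component carries the active prefix, and (c) whether the relevant component is qualified $\lin$ or $\un$.

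The key steps are as follows. For a process $\sendp$, the relevant patterns are \aoutl, \aoutcll, \aoutclr, \aoutu, \aoutclu, \aoutcru; I would note that $\Gamma(x)$ is either an end point type or a channel type, disjoint syntactic categories, which separates $\{\aoutl,\aoutu\}$ from the four channel patterns. Within the end-point case, $\aoutl$ requires the entry to be $\lin\,\OUT TS$ and $\aoutu$ requires $\un\,\OUT TS$; these are exclusive because a qualifier is either $\lin$ or $\un$, not both, and here I would appeal to the fact that type equality is taken up to unfolding so that the outermost qualifier is well defined once we have picked the canonical unfolded representative. Within the channel case, the pairs $\{\aoutcll,\aoutclu\}$ (prefix on the left component) and $\{\aoutclr,\aoutcru\}$ (prefix on the right) are separated by which component of $\cotype{M}{N}$ is a send pre-type; here safety of $\Gamma$ is used, since $\safe(\cotype{M_1}{M_2})$ forces at most one of $M_1,M_2$ to be an active linear send/receive (the other being $\circ$ or $\un\,\End$), and for the unrestricted sub-cases the $\avarusr,\avarusl$-style side conditions $\un\,p\ne N$ (resp. $\ne M$) pin down the component uniquely. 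The linear-versus-unrestricted split inside each of these pairs is again by the outermost qualifier. The input case $\receivep$ is entirely symmetric, using \ainl, \aincll, \ainclr against \ainu, \ainclu, \aincru.

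The main obstacle I anticipate is the channel-type cases: there, an entry of the form $\cotype{\lin\,p_1}{\lin\,p_2}$ could in principle be read with the active prefix on either side, so I must show that \emph{safety} rules this out --- i.e., that a safe channel entry never simultaneously offers a linear output (or input) pre-type on both components in a way that would match two distinct patterns. This is exactly where the hypothesis $\safe(\Gamma)$ is indispensable rather than decorative, and I would phrase it as: the defining clauses of $\safe(\cotype{M_1}{M_2})$ that produce a non-trivial (non-$\circ$, non-$\un\,\End$) pair are the linear clause $\safe(\cotype{\lin\,\IN TS_1}{\lin\,\OUT TS_2})$ and the unrestricted clause $\safe(\cotype{\un\,\IN TS_1}{\un\,\OUT TS_2})$, each of which pairs exactly one input side with one output side, so the subject of an output prefix can match the output component only, and dually for input. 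Once this observation is in place, the remaining distinctions (left/right, $\lin$/$\un$) are purely syntactic and routine, and the lemma follows. I would close by remarking that the degenerate entries $\circ$ and $\cotype{\circ}{\circ}$ (and mixed forms like $\cotype{\circ}{N}$) match \emph{no} input/output pattern at all, which accounts for the ``zero'' in ``zero or one''.
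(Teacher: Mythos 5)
Your proposal is correct, and it is worth noting that the paper itself states Lemma~\ref{lem:deterministic} \emph{without} a proof, so you are supplying an argument the author leaves implicit rather than reproducing one. Your decomposition is the natural one: the constructors $\INACT$, parallel, replication and restriction each own a single pattern; for input/output the patterns are separated syntactically by whether $\Gamma(x)$ is an end point or a channel entry and by the outermost qualifier (well defined after unfolding, by contractiveness); and the only place where two patterns could genuinely collide is a channel entry whose two components both offer the active prefix, which is exactly what $\safe$ forbids --- the only non-degenerate safe pairs are $\cotype{\lin\,\IN T{S_1}}{\lin\,\OUT T{S_2}}$, $\cotype{\un\,\IN T{S_1}}{\un\,\OUT T{S_2}}$ and pairs with a $\circ$ or $\un\,\End$ component, each carrying at most one send and at most one receive side. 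One small correction: for the process-level unrestricted channel patterns (\aoutclu\ versus \aoutcru, and \ainclu\ versus \aincru) the left/right disambiguation is achieved by $\safe$ alone, since an entry such as $\cotype{\un\,\OUT T{S}}{\un\,\OUT{T'}{S'}}$ is never safe; the side conditions $\un\,p\ne N$ and $\un\,p\ne M$ that you cite belong only to the value rules \avarusr\ and \avarusl, where they are genuinely needed to avoid a double match against \avaruend\ on the \emph{safe} entry $\cotype{\un\,\End}{\un\,\End}$. This misattribution does not affect the validity of your argument, and your closing observation that degenerate entries like $\circ$ or $\cotype{\circ}{\circ}$ match no input/output pattern correctly accounts for the ``zero'' in the statement.
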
 
 
\begin{lemma}
If {\sf check}$(\Gamma',P')$ has been recursively invoked by
{\sf typeCheck}$(\Gamma,P)$ then we have $\safe(\Gamma')$. 
\end{lemma}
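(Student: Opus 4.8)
The plan is to prove the statement by induction on the structure of the derivation of the recursive call chain, or equivalently by induction on the structure of the process $P$, tracking the invariant $\safe(\Gamma')$ along every recursive descent. The base case is the top-level call itself: {\sf typeCheck}$(\Gamma,P)$ first applies {\sf safe}$(\Gamma)$, which either raises an exception (in which case no recursive call to {\sf check} occurs and the statement holds vacuously) or returns $\Gamma$ unchanged together with the guarantee $\safe(\Gamma)$; so the initial call {\sf check}$(\Gamma,P)$ indeed receives a safe context. For the inductive step I would assume that some call {\sf check}$(\Gamma_1,Q)$ has been reached with $\safe(\Gamma_1)$, and show that every call {\sf check}$(\Gamma',Q')$ that it spawns directly also has $\safe(\Gamma')$; composing these along the (finite) call tree yields the claim.

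The inductive step proceeds by a case analysis on which pattern of {\sf check} matches $(\Gamma_1,Q)$, using Lemma~\ref{lem:deterministic} to know that the case analysis is exhaustive and non-overlapping for safe $\Gamma_1$. For \ainact there is no recursive call, so nothing to prove. For \apar the first recursive call is {\sf check}$(\Gamma_1,P)$, safe by hypothesis; the second is {\sf check}$(\Gamma_2,Q)$ where $\Gamma_1\vdash P\ret\Gamma_2$, and here I would invoke Lemma~\ref{lem:monotonic}, which gives $\safe(\Gamma_2)$ directly. For the linear input rules \ainl, \aincll, \ainclr the recursive call is on a context of the form $\Gamma_1,x:S,y:T$ (or with the channel variant on $x$); safety of this context follows from $\safe(\Gamma_1)$ together with the fact that the entry for $x$ in $\Gamma_1$ is $\lin\IN TS$ (resp. $\cotype{\lin\IN TS}{N}$), whose safety — by the definition of $\safe$ on linear channel types — forces $\safe(T)$ and $\safe$ of the continuation; for the end-point case one checks directly that $\safe(\lin\IN TS)$ entails $\safe(S)$. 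The unrestricted input rules \ainu, \ainclu, \aincru are analogous. For the output rules \aoutl, \aoutcll, \aoutclr the first "recursive" step is a call to {\sf checkVar} on $\Gamma_1,x:\circ$, which is safe since $\circ$ is a safe entry and deleting/replacing entries preserves safety; the continuation is then checked on $\Gamma_2\uplus x:S$, where $\Gamma_1,x:\circ\vdash y:T\ret\Gamma_2$, so $\safe(\Gamma_2)$ by Lemma~\ref{lem:monotonic} applied to the variable judgement (which enjoys the same domain/safety preservation), and $\uplus$ with the safe entry $S$ — again obtained from the safety of the linear output channel type — keeps it safe. The unrestricted output rules and the replication rule \arepl (whose recursive call is simply on $\Gamma_1$ itself) are immediate.

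The case most worth stating carefully is \ares: the recursive call is {\sf check}$((\,\safe([(x,T)])\,)@\Gamma_1, P)$, i.e.\ on $\Gamma_1,y:T$. Here the inner call to {\sf safe} either aborts (no recursive {\sf check} call, statement vacuous) or certifies $\safe(T)$; combined with $\safe(\Gamma_1)$ and the fact that extending a safe context with a safe entry yields a safe context, we get $\safe(\Gamma_1,y:T)$. I expect the main obstacle to be purely bookkeeping: making precise the claim that $\safe$ is preserved by the elementary context operations ($\uplus$, {\sf remove}, {\sf setVoid}, extension by a safe entry, and prefix/suffix splitting), and that Lemma~\ref{lem:monotonic} applies not only to process judgements $\Gamma_1\vdash P\ret\Gamma_2$ but also to the variable judgements $\Gamma_1\vdash v:T\ret\Gamma_2$ used inside the output rules — either because Lemma~\ref{lem:monotonic} is implicitly stated for both, or because it is trivial to verify directly from the shape of the \avarl--\avaruend rules that they only replace a linear entry by $\circ$ (or a channel half by $\circ$), which preserves both domain and safety. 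Once those small lemmas are in hand, the case analysis above is routine.
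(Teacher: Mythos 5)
Your proof is correct and follows essentially the same route as the paper's: an induction along the call tree showing that every directly spawned recursive call receives a safe input context, with Lemma~\ref{lem:monotonic} supplying safety of the intermediate output contexts that are fed back in as inputs in the \apar and output cases (what the paper calls ``exploiting transitivity''). The paper's own argument is merely a three-line compression of exactly this case analysis, so your more detailed version introduces nothing structurally different.
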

\begin{proof}
A call is a match of a pattern $\Gamma_1\vdash P\ret \Gamma_2$ which is an axiom
whenever $P=\INACT$, and has been inferred from an hypothesis starting with a type
environment $\Delta $ on the left otherwise.
We proceed by induction 
and show a stronger result, namely that  
$\safe(\Delta )$ implies $\safe(\Gamma_1)$.
We close the proof by applying Lemma~\ref{lem:monotonic}, and eventually
by exploiting transitivity in cases for output and parallel composition.
\end{proof} 

\begin{corollary}
\label{cor:deterministic}
If {\sf check}$(\Gamma',P')$ is a call invoked during the execution of {\sf
typeCheck}$(\Gamma,P)$ then there are zero or one patterns to match.
\end{corollary}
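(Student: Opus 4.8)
The plan is to derive the statement immediately from the two results that precede it, so that the proof is essentially a one-line composition together with a small remark on the initial call. First I would recall that {\sf typeCheck}$(\Gamma,P)$ is implemented as {\sf un}$({\sf check}({\sf safe}(\Gamma),P))$, so every invocation of {\sf check} that occurs during its execution is either the top-level call {\sf check}$({\sf safe}(\Gamma),P)$ or one of the calls spawned recursively from it.

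For the top-level call, observe that if {\sf safe}$(\Gamma)$ raises an exception then no call to {\sf check} takes place at all and there is nothing to prove; otherwise {\sf safe}$(\Gamma)=\Gamma$ and $\safe(\Gamma)$ holds by the definition of the {\sf safe} function, so the context argument passed to {\sf check} is safe. For every proper recursive call {\sf check}$(\Gamma',P')$, the lemma immediately preceding this corollary already gives $\safe(\Gamma')$. Hence in all cases the context argument of {\sf check} is safe.

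Having established that, Lemma~\ref{lem:deterministic} applies directly: a safe context argument guarantees that {\sf check}$(\Gamma',P')$ matches zero or one patterns, which is exactly the claim. There is no real obstacle here; the inductive work has already been discharged in the proof of the previous lemma (via Lemma~\ref{lem:monotonic}, and transitivity for the output and parallel cases), and the only point worth spelling out is that the initial call, which is not itself a recursive call, still enjoys a safe argument because the guard {\sf safe}$(\Gamma)$ either fails outright or returns the unchanged, hence safe, context.
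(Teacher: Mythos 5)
Your proof is correct and follows exactly the route the paper intends: the corollary is an immediate combination of Lemma~\ref{lem:deterministic} with the preceding lemma guaranteeing $\safe(\Gamma')$ for every recursive call, plus the observation that the top-level call receives a safe context because {\sf typeCheck} guards it with the {\sf safe} function. Nothing further is needed.
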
 
   
\section{Soundness}
\label{sec:soundness} 
\begin{figure}[t]
 \emph{Context splitting rules}
  \begin{gather*}
    \emptyset = \emptyset \csplit \emptyset
    \qquad \qquad    
    \frac{
      \I = \I_1 \csplit \I_2
      \qquad
      T = \un\, p \text{ or } \cotype{\un\, p_1}{\un\, p_2}
    }{
      \I, x\colon T = (\I_1,x\colon T) \csplit (\I_2,x\colon T)
    }
    \\
    \frac{
      \I = \I_1 \csplit \I_2
      \qquad
      T = \lin\, p \text{ or } \cotype{\lin\, p_1}{\lin\, p_2}
    }{
      \I, x\colon T = (\I_1,x\colon T) \csplit \I_2      
    }
    \qquad
    \frac{
      \I = \I_1 \csplit \I_2
      \qquad
      T = \lin\, p \text{ or } \cotype{\lin\, p_1}{\lin\, p_2} 
    }{
      \I, x\colon T = \I_1 \csplit (\I_2,x\colon T)
    }
    \\
    \frac{
      \I = \I_1 \csplit \I_2
    }{
      \I, x\colon\cotype{\lin\, p_1}{\lin\, p_2} = (\I_1,x\colon \lin\, p_1)
      \csplit (\I_2,x\colon \lin\, p_2)
    }
  \\
    \frac{
      \I = \I_1 \csplit \I_2
    }{
      \I, x\colon\cotype{\lin\, p_1}{\un\, p_2} = (\I_1,x\colon
\cotype{\lin\, p_1}{\un\, p_2})
      \csplit (\I_2,x\colon \un\, p_2)
    }
    \\
    \frac{
      \I = \I_1 \csplit \I_2
    }{
      \I, x\colon\cotype{\lin\, p_1}{\un\, p_2} = (\I_1,x\colon \un\, p_2)
      \csplit (\I_2,x\colon\cotype{\lin\, p_1}{\un\, p_2})
    }
 \end{gather*}
  \emph{Typing rules for values}
  \begin{gather*}
    \tag*{\tvar\tstrength}
    \frac{
      \Un(\I)
    }{
      \I,x\colon T\vdash_D x \colon T
    }
    \qquad
    \frac{
      \I \vdash_D v \colon \cotype{S}{\un\,p} 
    }{
      \I \vdash_D v \colon S
    }
  \end{gather*}
  \emph{Typing rules for processes}
  \begin{gather*}
    \tag*{\tinact\tpar}
    \frac{
      \Un(\I)
    }{
      \I \vdash_D \INACT
    }      
    \qquad
    \frac{
      \I_1 \vdash_D R_1
      \qquad
      \I_2 \vdash_D R_2
    }{
      \I_1 \csplit \I_2 \vdash_D R_1 \PAR R_2
    }
   \\
    \tag*{\trepl\tres}
        \frac{
      \I \vdash_D R
      \qquad
      \un(\I)
    }{
      \I \vdash_D !R
    }
    \qquad
    \frac{
      \I_,x\colon T \vdash_D R
      \qquad
      \safe(T) 
    }
    {
      \I \vdash_D \NR{x\colon T}R
    }
    \\
    \tag*{\tin,\tout}
    \frac{
      \I,x\colon S,y\colon T   \vdash_D R
      \qquad (*)
    }{
      \I,x\colon q\IN TS\vdash_D \receivep
    }
    \qquad
    \frac{
      \I_1\vdash_D v \colon T
      \qquad
      \I_2, x\colon S\vdash_D R
      \qquad (**)
    }{
      \I_1\csplit(\I_2, x\colon q\,\OUT TS)\vdash_D \sendp
    }
    \\
    \tag*{\tinc,\toutc}
    \frac{ 
      \I,x\colon \cotype{S}{S'},y\colon T\vdash_D R
      \qquad (*)
    }{
      \I,x\colon\cotype{q\IN TS}{S'}\vdash_D \receivep
    }
    \qquad
    \frac{
      \I_1\vdash_D v \colon T
      \qquad
      \I_2, x\colon \cotype{S}{S'}\vdash_D R
      \qquad (**)
    }{
      \I_1\csplit(\I_2, x\colon \cotype{q\,\OUT TS}{S'})\vdash_D \sendp
    }
    \\
    (*)\ q=\un \Rightarrow q\IN TS=S
    \qquad\qquad
    (**)\ q=\un \Rightarrow q\OUT TS=S
  \end{gather*}
  \caption{Split-based typing system}
  \label{fig:split-system}
\end{figure}


This section is devoted to establishing the soundness of the algorithm.
To this aim we  project 
the pattern rules presented in Section~\ref{sec:algorithm} into the typing system  of
Figure~\ref{fig:split-system}, which satisfies subject reduction~\cite{GiuntiV10}. 
The syntax of types and processes occurring
in Figure~\ref{fig:split-system} is that of Figure~\ref{fig:syntax}.
Contexts $\I$ are a map from variables to types $T$:
\[
 \I \grmeq  \emptyset \PAR \I, x\colon T  \ .
\]   
Typing rules in Figure~\ref{fig:split-system} are based on a declarative
definition
of context splitting; the intuition is that
unrestricted types are copied into both contexts, while linear types are
placed in one of the two resulting contexts. 
We refer to~\cite{GiuntiV10} for the details.

We introduce preliminary Lemmas and Definitions which will be useful to prove
the main result of this section.
Given a judgment $\Gamma_1\vdash P\ret\Gamma_2$ of the algorithmic
system of Section~\ref{sec:algorithm} , we let 
the  used closure of a type context $\Gamma_1$ w.r.t. $\Gamma_2$, noted
$\Gamma_1 \ret \Gamma_2$, be the typing context $\emptyset $ whenever
$\Gamma_1=\emptyset$, and be defined by 
$(\Gamma_1\ret\Gamma_2)(x)=\Gamma_1(x)\ret\Gamma_2(x)$ otherwise:
\begin{align*}
\circ \ret \circ  &= \circ
&
\lin\,p_1\ret \lin p_1  &= \circ
\\
 \lin\,p_1\ret \circ &= \lin\,p_1 
&
 \un\,p_1 \ret  \un\,p_1  &=\un\,p_1   
 \\
\cotype{M}{N}\ret \cotype{M'}{N'}&=
\cotype{M\ret M'}{N\ret N'} \ .
\end{align*}

\noindent
The map operation projects a type environment $\Gamma$ 
into a context $\I$ of Figure \ref{fig:split-system}.
When applied to a used closure, it permits to map  
linear typings which do not change from 
$\Gamma_1$ to $\Gamma_2$  into the $\un\,\End$ type.
\begin{align*}
\used(\circ)&=\un\,\End & 
\used(S)&=S 
\\
\used(\cotype{M}{N})&=\cotype{\used(M)}{\used(N)} 
&
\used(\Gamma)&=\bigcup_{x\in\dom(\Gamma)} x:\used(\Gamma(x))
&\end{align*}
 
\begin{lemma}\label{lem:usedefined}
Assume $\safe(\Gamma_1)$. If $\Gamma_1\vdash P\ret\Gamma_2$ then
$\used(\Gamma_1\ret\Gamma_2)$ is defined. 
\end{lemma}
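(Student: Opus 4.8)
The plan is to reduce the claim to a purely combinatorial monotonicity property of the algorithmic judgment and then establish that property by induction on derivations. The first observation is that $\used$ is a total function on well-formed contexts: $\used(\circ)=\un\,\End$, $\used(S)=S$, and $\used(\cotype{M}{N})=\cotype{\used(M)}{\used(N)}$ are all defined. Hence it suffices to show that the used closure $\Gamma_1\ret\Gamma_2$ is itself defined. Since Lemma~\ref{lem:monotonic} already gives $\dom(\Gamma_1)=\dom(\Gamma_2)$, the closure is defined precisely when the pointwise operation $\Gamma_1(x)\ret\Gamma_2(x)$ is defined for every $x$ in the common domain. Reading off the clauses of $\ret$, this is exactly the invariant: if $\Gamma_1(x)=\circ$ then $\Gamma_2(x)=\circ$; if $\Gamma_1(x)=\lin\,p$ then $\Gamma_2(x)\in\{\lin\,p,\circ\}$; if $\Gamma_1(x)=\un\,p$ then $\Gamma_2(x)=\un\,p$; and if $\Gamma_1(x)$ is a channel type the same holds componentwise. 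In words: a void entry stays void, a linear entry can at most be consumed to void, an unrestricted entry is frozen, and channels act coordinatewise.

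First I would prove this invariant by induction on the derivation of $\Gamma_1\vdash P\ret\Gamma_2$, proving simultaneously the analogous statement for the value judgments $\Gamma_1\vdash v\colon T\ret\Gamma_2$, which is needed because the output rules invoke {\sf checkVar}. Two auxiliary facts carry the argument. The first is reflexivity: $M\ret M$ is always defined, which handles \ainact and \arepl (whose premise forces $\Gamma_2=\Gamma_1$). The second, and the only genuinely reusable lemma, is composition of $\ret$: if $M_1\ret M_2$ and $M_2\ret M_3$ are both defined, then so is $M_1\ret M_3$; this is a one-line case split on the shape of $M_1$ using the characterisations above, and it is what lets me thread the invariant through rules with several premises. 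The hypothesis $\safe(\Gamma_1)$ is used only to apply Lemma~\ref{lem:monotonic} and the earlier lemma guaranteeing that every recursive call again receives a safe context, so that the induction hypothesis is available for each premise.

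The case analysis is then routine. For values, every variable rule either returns the context unchanged (\avaru and the purely unrestricted channel patterns) or replaces the linear type of the checked variable --- possibly one side of its channel type --- by $\circ$, leaving all other entries alone; the invariant holds in either case. For processes, \ares is immediate from the induction hypothesis on its premise, the bound name being fresh and erased from the output. The remaining rules --- \apar, \aoutl together with \aoutcll and \aoutclr, \aoutu together with \aoutclu and \aoutcru, \ainl together with \aincll and \ainclr, and \ainu together with \ainclu and \aincru --- all thread the context through two or three premises; for every variable other than the prefix subject $x$ and the input-bound variable (which is fresh and then pruned) I chain the per-entry statements given by the induction hypotheses using composition of $\ret$, and for $x$ itself the conclusion maps it directly from a linear entry to $\circ$, or from a channel entry to $\cotype{\circ}{N}$ or $\cotype{M}{\circ}$, which $\ret$ accepts via the $\lin\,p\ret\circ$ clause.

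The step I expect to be the main obstacle is the context bookkeeping in the output rules, where the context is passed as $\Gamma_1$, turned into $\Gamma_2$ by {\sf checkVar}, updated to $\Gamma_2\uplus x\colon S$, and finally returned as $\Gamma_3$; the updating $\uplus$ momentarily replaces the void entry for $x$ by the session type $S$. I must verify that this never makes the conclusion's own input-to-output map run from $\circ$ to a session type, which would violate the invariant. It does not, because $\uplus$ is applied only to the input of an inner recursive call and never to a context returned by a conclusion, and in every input and output rule the prefix subject $x$ enters already carrying a linear or channel type rather than $\circ$. Once the invariant is established, $\Gamma_1\ret\Gamma_2$ is a well-formed context, so $\used(\Gamma_1\ret\Gamma_2)$ is defined, which is the claim.
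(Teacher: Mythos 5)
Your proof is correct. The paper states Lemma~\ref{lem:usedefined} without giving a proof, and your argument --- reducing definedness of $\used(\Gamma_1\ret\Gamma_2)$ to the pointwise invariant that each entry of the output context is either identical to the input entry (unrestricted or void), equal to it, or its consumption to $\circ$ (linear, componentwise for channel types), and then establishing that invariant by simultaneous induction on the process and value derivations, using reflexivity and transitivity of definedness of $\ret$ to thread the multi-premise rules --- is exactly the intended one; in particular you correctly isolate the one delicate point, namely that the $\uplus$ update appears only in the \emph{input} of an inner premise (where the subject re-enters with a session type rather than $\circ$) and therefore never makes a conclusion's own input-to-output map run from $\circ$ to a non-void entry.
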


A  used closure generated by the algorithmic system is sufficient to type a process
with the system $\vdash_D$,  as we will show in a
nontrivial manner below.
We need a couple of lemmas for strengthening judgments of the algorithmic
system and weaken judgments of the split-based system. 

\begin{lemma}[Algorithmic strengthening]
\label{lem:alg-strength}
The following hold. 
\begin{enumerate}
\item If $\Gamma_1,x:\lin p\vdash P\ret \Gamma_2,:x:\lin p$ then
$\Gamma_1\vdash P\ret \Gamma_2$;
\item If $\Gamma_1,x:\cotype{\lin p}{S}\vdash P\ret \Gamma_2,
x:\cotype{\lin p}{N}$  then $\Gamma_1,x:S\vdash P\ret \Gamma_2,x:N'$;
\item If $\Gamma_1,x:\cotype{M}{\lin p}\vdash P\ret \Gamma_2,
x:\cotype{M'}{\lin p}$ then $\Gamma_1,x:M\vdash P\ret \Gamma_2,x:M'$;
\item If $\Gamma_1,x:\circ\vdash P\ret \Gamma_2,:x:\circ$ then
$\Gamma_1\vdash P\ret \Gamma_2$;
\item If $\Gamma_1,x:\cotype{\circ}{N}\vdash P\ret \Gamma_2,:
x:\cotype{\circ}{N'}$  then $\Gamma_1,x:N\vdash P\ret \Gamma_2,x:N'$;
\item If $\Gamma_1,x:\cotype{M}{\circ}\vdash P\ret \Gamma_2,
x:\cotype{M'}{\circ}$ then $\Gamma_1,x:M\vdash P\ret \Gamma_2,x:M'$;
\item If $\Gamma_1,x:\un\,p\vdash P\ret \Gamma_2,x:\un\,p$ and $x\not\in\fv(P)$  
then $\Gamma_1 \vdash P\ret \Gamma_2 $;
\item If $\Gamma_1,x:\cotype{\un\,p_1}{\un\,p_2}\vdash P\ret
\Gamma_2,x:\cotype{\un\,p_1}{\un\,p_2}$ and $x\not\in\fv(P)$ 
then $\Gamma_1 \vdash P\ret \Gamma_2 $.
\end{enumerate}
\end{lemma}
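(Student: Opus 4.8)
The plan is to prove all eight items simultaneously by a single induction on the derivation of the algorithmic judgment $\Gamma_1,x:\cdots\vdash P\ret\Gamma_2,x:\cdots$. The eight statements share a common shape: the hypothesis gives a derivation in a context where $x$ carries a ``heavy'' assumption (a linear end point, a linear channel end, a void, or an unused unrestricted assumption), and the conclusion replaces it by the corresponding ``lighter'' assumption (dropping $x$ entirely, or passing from a channel type to one of its ends, or stripping the void). The key observation is that, because $x$ does not appear among the variables modified by the prefix being typed (it is an ambient assumption, not the subject), the rule applied at the root of the hypothesis derivation treats the $x$-entry only by threading it through unchanged, or by one of the variable rules \avarl, \avarlll, \avaruul, etc.\ if $P$ actually uses $x$ at that step.

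First I would set up the case analysis on the last rule of the derivation. For the structural rules \ainact, \apar, \arepl the argument is immediate or a direct appeal to the induction hypothesis, after observing (via Lemma~\ref{lem:monotonic}) that the domain is preserved, so the $x$-entry is tracked uniformly through the premises; in \apar one applies the induction hypothesis to both premises and composes. For \ares one invokes the induction hypothesis on the single premise (the new variable $y$ is distinct from $x$ by the variable convention, so it does not interfere). The prefix cases \aoutl, \aoutcll, \aoutclr, \ainl, \aincll, \ainclr, and their unrestricted counterparts, are the substance: here one must distinguish whether the subject of the prefix is $x$ itself or some other variable $w$. If it is some $w\ne x$, the $x$-entry is merely passed along into the premise, and the induction hypothesis applies directly, possibly after using Lemma~\ref{lem:monotonic} to know the shape of $x$'s entry in the returned context. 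If the subject is $x$, then $P$ does use $x$, which is only compatible with the hypotheses of items~1--3 and~7--8 (never with the void items~4--6, since a void entry matches no prefix rule), and in those cases one unfolds the relevant rule, applies the variable-typing sub-derivation, and checks that the strengthened judgment is obtained by the corresponding rule instance with the lighter $x$-entry --- using, for the linear cases, that rule \aoutl etc.\ already set $x$ to $\circ$ in the output so the bookkeeping lines up.

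The main obstacle I expect is the interplay between items~2, 3, 5 and~6, i.e.\ the channel-type cases. There, stripping one end of $\cotype{\lin p}{S}$ down to the bare $S$ (or $\cotype{\circ}{N}$ down to $N$) while the derivation may have used rules \aoutcll/\aoutclr or \aincll/\ainclr --- which are defined to delegate to the end-point pattern \aoutl/\ainl and then re-wrap the result --- forces one to ``peel'' exactly the wrapping that those rules add, and to verify that what remains is a valid end-point derivation in the lighter context. This is where one must be careful that the continuation type $N'$ in the conclusion is chosen consistently (it is the projection of the returned entry, as produced by the peeled rule), and that $x\notin\fv(P)$ in items~7--8 is genuinely needed: without it, $P$ could consume the unrestricted $x$, and although \avaru / \avaruul return $\Gamma$ unchanged so the output still matches, the conclusion judgment would lack $x$ in scope — hence the side condition rules this out. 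I would handle the four channel items as mutually dependent sub-inductions, appealing to items~1 and~4 (the end-point and void cases) on the peeled premises, and close with the straightforward remaining cases \aoutu, \ainu and their variants, where no linear bookkeeping on $x$ occurs at all.
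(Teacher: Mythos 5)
The paper states Lemma~\ref{lem:alg-strength} without any proof, so there is no authorial argument to compare yours against; your induction on the derivation, with a case split on the last rule and on whether $x$ is the subject of the prefix, the transmitted value, or merely an ambient assumption, is the natural (and essentially the only) way to prove it. Your handling of the individual cases is sound: for items~1--3 you correctly observe that the unchanged linear end forces $x$ not to be used on that end (any use would leave a $\circ$ in the output), for items~4--6 that no prefix pattern matches a void subject, and for items~7--8 that the side condition $x\not\in\fv(P)$ is exactly what keeps the strengthened judgment derivable.

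The one point you leave implicit, and which the proof cannot do without, is an invariant on \emph{intermediate} contexts: in \apar the hypothesis only constrains $\Gamma_1(x)$ and $\Gamma_3(x)$, and to apply the induction hypothesis to both premises you must know that the middle context $\Gamma_2$ assigns $x$ an entry of the same shape (e.g.\ for item~2, that its left component is still $\lin p$ and not $\circ$). You appeal to Lemma~\ref{lem:monotonic} for this, but that lemma only gives preservation of domains and of safety, not of the shape of a particular entry. What is actually needed is the observation that the algorithm never restores a voided end of a \emph{non-subject} variable (the only $\circ\!\to\!S$ transition is the $\uplus$ on the output subject in \aoutl and its variants), so an end that is unchanged between input and output was unchanged in every intermediate context. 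The same invariant is needed inside the two-premise prefix rules, where {\sf checkVar} runs before the recursive call on the continuation. This is a routine auxiliary lemma, but it should be stated and proved separately; once it is in place your case analysis goes through, including the peeling of \aoutcll/\aoutclr and \aincll/\ainclr down to their end-point premises (which, as you note, directly yields the strengthened judgment with $N'=\circ$).
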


\begin{lemma}[Weakening]
\label{lem:weak}
$\I,x:S\vdash_D P \text{ implies }
\I,x:\cotype{S}{\un\,p}\vdash_D P$.
\end{lemma}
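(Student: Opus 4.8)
\emph{Proof plan.} The plan is to induct on the derivation of $\I,x:S\vdash_D P$, proving simultaneously the analogous statement for value judgments, namely $\I,x:S\vdash_D v:T'$ implies $\I,x:\cotype{S}{\un\,p}\vdash_D v:T'$ (rules \tvar\ and \tstrength\ tie the two forms together). I will also use the routine \emph{unrestricted weakening} property --- $\I\vdash_D J$ implies $\I,z:\un\,p\vdash_D J$ for a fresh $z$, and likewise for $z:\cotype{\un\,p_1}{\un\,p_2}$ --- which is itself an easy auxiliary induction. The case analysis is on the last rule of the derivation.

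The cases \tinact\ and \trepl\ are immediate: their premises impose an $\Un$/$\un$ condition on the whole context, so $S=\un\,p'$, hence $\cotype{S}{\un\,p}$ is again unrestricted and the same rule re-applies (after the induction hypothesis in the \trepl\ case). For \tres\ the bound name differs from $x$ by the variable convention, so the induction hypothesis applies to the premise at position $x$ and the rule re-applies with $\safe(T)$ untouched. For \tvar\ with subject $x$ we have $S=T'$: apply \tvar\ to get $\I,x:\cotype{S}{\un\,p}\vdash_D x:\cotype{S}{\un\,p}$ (the $\Un$ side condition concerns the remaining context only), then \tstrength\ to strip the $\un\,p$ component; if the subject lies in $\I$, that $S$ must be unrestricted and \tvar\ re-applies directly. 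For \tstrength\ we push the induction hypothesis through the premise and re-apply the rule. In the input cases, if $x$ is the subject channel the derivation must end with \tin\ (ending with \tinc\ would force $S$ to be a channel type), so $S=q\IN T S_0$; the induction hypothesis turns the premise $\I,x:S_0,y:T\vdash_D R$ into $\I,x:\cotype{S_0}{\un\,p},y:T\vdash_D R$, and \tinc\ with second component $\un\,p$ delivers the goal, the side condition $(*)$ being exactly the one already discharged by the original \tin\ step. If $x$ is neither the subject channel nor the bound name, the induction hypothesis applies to the premise and the same rule re-applies.

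The output cases (and \tpar) additionally involve context splitting, and this is the crux of the argument. One follows $x:S$ through the split. If $S$ is unrestricted, $x$ is copied to both subcontexts; we weaken each copy by the induction hypothesis and recombine with the splitting rule for $\cotype{\un\,p_1}{\un\,p_2}$. If $S$ is linear, say $S=\lin\,p'$, the entry $x:S$ sits in exactly one subcontext: we weaken that subcontext by the induction hypothesis so that it carries $x:\cotype{\lin\,p'}{\un\,p}$, add $x:\un\,p$ to the other subcontext by unrestricted weakening, and recombine using whichever of the two splitting rules for $\cotype{\lin\,p_1}{\un\,p_2}$ places the linear-bearing entry on the side that receives the continuation ($R$ for \tout/\toutc, $R_1,R_2$ for \tpar). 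The side condition $(**)$ is preserved because it constrains only the end-point $S$, which we keep fixed while merely appending the unrestricted component $\un\,p$.

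The main obstacle is precisely this reorganisation of context splitting: recognising that the freshly introduced $\un\,p$ component must be duplicated into whichever subcontext did not receive the linear part of $x$, which forces the auxiliary unrestricted-weakening lemma into the proof and requires choosing the right splitting rule among those for $\cotype{\lin\,p_1}{\un\,p_2}$. Everything else is a mechanical re-application of the rules, using that $\un\,p$ satisfies every unrestricted requirement trivially and leaves the side conditions $(*)$ and $(**)$ intact.
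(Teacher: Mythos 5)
The paper states this weakening lemma without proof, treating it as a routine property of the split-based system of Figure~\ref{fig:split-system}, so there is no official argument to compare against. Your induction is the standard one and it is essentially correct: the simultaneous value-judgment statement, the use of \tvar followed by \tstrength when $x$ is the subject of a value judgment, and above all the treatment of context splitting --- weakening the half that carries the linear occurrence of $x$ via the induction hypothesis, padding the other half with $x:\un\,p$ by ordinary unrestricted weakening, and recombining with one of the two splitting rules for $\cotype{\lin\,p_1}{\un\,p_2}$ --- is exactly the content the lemma needs. One small imprecision: in \tout/\toutc the linear occurrence of $x$ may sit on the value side $\I_1$ rather than on the continuation side (for instance when $x$ is the transmitted value $v$), so the splitting rule must be chosen to place the $\cotype{\lin\,p'}{\un\,p}$ entry on whichever side originally held $x:\lin\,p'$, not necessarily ``the side that receives the continuation''; this does not affect the substance of the argument.
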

 
We have all the ingredients to prove the following result which is the
wedge of the proof of soundness. 
\begin{lemma}\label{lem:soundness}
Assume $\safe(\Gamma_1)$. The following hold.
\begin{enumerate}
\item If $\Gamma_1\vdash v\colon T\ret\Gamma_2$ then  
$\used(\Gamma_1\ret\Gamma_2)\vdash_D v\colon T $;
\item If $\Gamma_1\vdash P\ret\Gamma_2$ then  
$\used(\Gamma_1\ret\Gamma_2)\vdash_D P $.
\end{enumerate}
\end{lemma}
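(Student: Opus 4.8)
The plan is to prove the two statements simultaneously by induction on the derivation of the algorithmic judgment $\Gamma_1 \vdash v \colon T \ret \Gamma_2$ (resp. $\Gamma_1 \vdash P \ret \Gamma_2$), performing a case analysis on the last algorithmic rule applied. For each case I would compute the used closure $\Gamma_1 \ret \Gamma_2$ explicitly from the shape of the rule's conclusion, apply $\used(\cdot)$ to it, and then exhibit a derivation in the split-based system $\vdash_D$ of Figure~\ref{fig:split-system}. Throughout I would freely invoke Lemma~\ref{lem:monotonic} (to know $\safe(\Gamma_2)$ and $\dom(\Gamma_2) = \dom(\Gamma_1)$), Lemma~\ref{lem:usedefined} (so that $\used(\Gamma_1\ret\Gamma_2)$ is defined), the algorithmic strengthening Lemma~\ref{lem:alg-strength}, and the weakening Lemma~\ref{lem:weak}.

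For the value cases, the three genuinely different shapes are \avaru (and its channel analogues \avaruul, \avarusr, \avaruend), where $\Gamma_2 = \Gamma_1$ so $\Gamma_1 \ret \Gamma_2$ leaves every entry fixed, the linear typings map to $\un\,\End$, and the $\vdash_D$ derivation is an instance of \tvar after checking $\Un$ of the surrounding context; \avarl (and \avarlsl, \avarlsr, \avarlll, \avarllr), where the used closure records the consumed linear entry as $\lin\,p$ itself, $\used$ of which is again $\lin\,p$, and the neighbours are either unchanged or already void, giving $\Un$ of everything but $x$ — so \tvar applies directly. For the channel-end value rules I would additionally use \tstrength and Lemma~\ref{lem:weak} to move between $\cotype{S}{\un\,p}$ and $S$.

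For the process cases the interesting rules are \ainl and \aoutl (the channel variants \aincll, \ainclr, \aoutcll, \aoutclr reduce to these by the strengthening Lemma~\ref{lem:alg-strength}, and the unrestricted rules \ainu, \aoutu are handled similarly but more simply). In \aoutl, from the premises $\Gamma_1,x\colon\circ \vdash y\colon T \ret \Gamma_2$, $\Gamma_2 \comb x\colon S \vdash P \ret \Gamma_3, x\colon M$ with $\Un(M)$, I would apply the induction hypothesis to both premises, getting $\used((\Gamma_1,x\colon\circ)\ret\Gamma_2) \vdash_D y\colon T$ and $\used((\Gamma_2\comb x\colon S)\ret(\Gamma_3,x\colon M)) \vdash_D P$; then I need to show these two $\vdash_D$-contexts split to $\used((\Gamma_1, x\colon\lin\OUT TS)\ret(\Gamma_3,x\colon\circ))$, so that \tout (with the side condition trivially satisfied since $q=\lin$) yields the conclusion. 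The key observation driving this splitting is that $\used(\circ) = \un\,\End$ and $\used(\lin\,p) = \lin\,p$, matched against the first context-splitting rule (unrestricted types copied to both sides) and the linear-splitting rules — the $x$-entry on the left is $\un\,\End$ and $M$'s $\used$-image is unrestricted, which combines with $S$'s contribution exactly as the splitting rules prescribe. The \ainl case is analogous but uses \tin, requires the additional premise $\Un(O)$ for the bound variable $y$ to discharge the strengthening of $y$ out of the context, and uses that pruning $y$ commutes with $\used$ and $\ret$. The \apar case uses \tpar with $\I_1\csplit\I_2$ instantiated so that $\I_1 = \used(\Gamma_1\ret\Gamma_2)$ and $\I_2 = \used(\Gamma_2\ret\Gamma_3)$, needing a small lemma that $\used(\Gamma_1\ret\Gamma_3)$ splits into these two — this is where the sequential threading of the algorithmic context matches the declarative splitting. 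The \arepl and \ares cases are direct: \arepl has $\Gamma_2 = \Gamma_1$ so the used closure fixes everything, hence $\un$ of the image, and \trepl applies; \ares uses $\safe(T)$ and \tres after strengthening $y$ out via Lemma~\ref{lem:alg-strength} and $\Un(O)$.

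The main obstacle I expect is the bookkeeping around the \emph{compatibility of $\used$, $\ret$, context splitting, and the $\uplus$ update} in the output and parallel cases: one must verify that applying the induction hypothesis to a premise whose input context is $\Gamma_2 \uplus x\colon S$ rather than $\Gamma_2$ itself produces a $\vdash_D$-context that is exactly the right half of the required split, and that the $\circ$-to-$S$ update in the algorithm corresponds precisely to re-inserting the linear send/receive capability on the declarative side. This requires a careful pointwise argument that $\used((\Gamma \uplus x\colon S)\ret(\Gamma',x\colon M))$ agrees with $(\used(\Gamma\ret\Gamma'))$ extended at $x$ by $\used(\circ\ret M) = \used(M)$ when read through the lens of the continuation's capability $S$ — and then checking this against each context-splitting rule in Figure~\ref{fig:split-system}. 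The channel-entry subcases multiply the work (each end can independently be $\circ$, $\lin\,p$, or $\un\,p$), but the \cotype{\cdot}{\cdot}-homomorphism property of both $\used$ and $\ret$ keeps the argument structural rather than genuinely harder.
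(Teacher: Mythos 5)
Your plan follows essentially the same route as the paper's proof: induction on the algorithmic derivation, translating each algorithmic rule into the corresponding $\vdash_D$ rule by computing $\used(\Gamma_1\ret\Gamma_2)$ pointwise and matching it against the context-splitting rules, with Lemma~\ref{lem:weak} and Lemma~\ref{lem:alg-strength} repairing the mismatches in the output and parallel cases. The one understatement is calling the \apar splitting fact ``a small lemma'': in the paper it is the bulk of the proof, an exhaustive case analysis on the shape of each context entry (tabulated in the appendix), and the split holds only after weakening a linear end to a channel type $\cotype{\lin\,p}{\un\,p'}$ or strengthening away fully consumed entries --- which is precisely the bookkeeping you correctly flag as the main obstacle.
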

\begin{proof}
We first prove (1).  
Assume  $\Gamma_1,x\colon  \lin\,p ,\Gamma_2\vdash x \colon  \lin\,  p 
    	\ret\Gamma_1,x \colon \circ,\Gamma_2$.
Notice that $\I=\used((\Gamma_1,\Gamma_2)\ret(\Gamma_1,\Gamma_2))$
is a safe type context such that $\un(I)$, i.e. it contains only
unrestricted typings, and that $ \lin\,p \ret\circ=\lin\, p$.
We apply \tvar and infer $\I,x:\lin\,p \vdash_D x:\lin\,p$.
The cases for typing a linear or unrestricted channel type, or an unrestricted
 channel type are analogous.
Assume 
$\Gamma_1,x\colon\cotype{\lin\,p  }{N},\Gamma_2
\vdash
x \colon \lin\, p
    	\ret\Gamma_1,x\colon \cotype{\circ}{N},\Gamma_2
$.
Let $\I=\used((\Gamma_1,\Gamma_2)\ret(\Gamma_1,\Gamma_2))$. 
We have  
$\I(x)=\cotype{\lin\,p}{S}$ with $S=\un\,p'$ or $S=\un\, \End$. 
From these results and \tvar we infer 
$\I,x:\cotype{\lin\,p}{S}\vdash x: \lin\,p   $.
Now assume that
$\Gamma \vdash_D   x \colon \un\, p \ret\Gamma$
with $\Gamma=\Gamma_1,x\colon \cotype{ \un\,p }{N},\Gamma_2$.
From $\un(\used(\Gamma\ret\Gamma))$ and \tvar we infer that there is 
$S=\un\,p'$ or $S= \un\,\End$ such that 
$\un(\used(\Gamma\ret\Gamma))\vdash_D x\colon \cotype{ \un\,p }{S}$.
We apply \tstrength and infer the desired result: 
$ \used(\Gamma\ret\Gamma) \vdash_D x\colon { \un\,p }$.

To prove (2) we proceed by induction on the length of the derivation for
$\Gamma_1\vdash P\ret\Gamma_2$. We prove the most interesting cases.
We use the notation $\Gamma\less x$ to indicate the context $\Gamma'$ whenever
$\Gamma=\Gamma',x:M$ or $\Gamma=\Gamma',x:\cotype{M}{N}$.

\begin{description}
\item{\apar}
We have $ \Gamma_1\vdash P\PAR Q\ret\Gamma_3$ inferred from
$ \Gamma_1\vdash P\ret\Gamma_2$ and 
$ \Gamma_2\vdash Q\ret\Gamma_3$. 
We proceed by case analysis on $\Gamma_1$.
If $\Gamma_1=\emptyset$ we are done by applying the first rule for context
splitting.
Otherwise assume $\Gamma_1=\Gamma,x:T$. We exploit Lemma \ref{lem:usedefined}
in order to infer the type of $\Gamma_2(x)$ and $\Gamma_3(x)$.
\begin{description}
\item{($T=\circ$)}.
We have $\used(\Gamma_1\ret\Gamma_2)(x)=\un\,\End$ and
$\used(\Gamma_2\ret\Gamma_3)(x)=\un\,\End$. We apply \tpar to the I.H by
using the second rule for context splitting to 
$
\used(\Gamma_1\ret\Gamma_2)\vdash_D P
$
and
$
\used(\Gamma_2\ret\Gamma_3)\vdash_D Q
$. 
\item{($T=\cotype{\circ}{\circ}$)}. Analogous to the previous case.
\item{($T=\lin\,p$)}.
We have two cases for
$\used(\Gamma_1\ret\Gamma_2)(x)$ corresponding to 
(i) $\lin\,p$ and (ii) $\un\,\End$. 
In case (i) we have $\used(\Gamma_2\ret\Gamma_3)=\un\,p'$. This is
because by definition of $\used$ we have that $\Gamma_2(x)=\circ$.
By  applying Lemma~\ref{lem:weak} we weaken $\used(\Gamma_1\ret\Gamma_2)$ and
obtain an environment $\Delta$ equal
to $\used(\Gamma_1\ret\Gamma_2)$ but for the entry
$x$ which is weakened to $\cotype{\lin \,p}{\un\,p'}$.
We apply the I.H. and infer the
desired result by applying the fifth rule for context splitting in \tpar:
$
\Delta\vdash_D P
$
and
$
\used(\Gamma_2\ret\Gamma_3)\vdash_D Q
$.
In case (ii) we have $\Gamma_2(x)=\lin p$. In sub case 
$\used(\Gamma_2\ret\Gamma_3)(x)=\lin\,p$ we apply the I.H. and proceed
 by weakening the type to $\cotype{\un\,\End}{\lin\,p}$ in order to apply the
sixth rule for splitting in \tpar.
In sub-case $\used(\Gamma_2\ret\Gamma_3)(x)=\un\,\End$ we apply the second
splitting rule. 
\item{($T=\cotype{ \lin\,p_1 }{ \lin\,p_2 }$)}.
We have four cases for
$\used(\Gamma_1\ret\Gamma_2)(x)$ corresponding to 
(iii) $\cotype{\lin\,p_1}{\lin\,p_2}$ and (iv)
$\cotype{\lin\,p_1}{\un\,\End}$ and (v)  $\cotype{\un\,\End}{\lin\,p_2}$ and
(vi) $\cotype{\un\,\End}{\un\,\End}$.
In case (iii) we infer $\Gamma_2(x)=\cotype{\circ}{\circ}$. We apply
Lemma~\ref{lem:alg-strength} and strengthen the algorithm's judgment by removing
the entry for $x$ in $\Gamma_2$: 
$\Gamma_2\less x\vdash Q\ret\Gamma_3\less x$.
We apply the I.H. and by \tpar we infer the desired result by applying the
fourth rule for context splitting to
$
\used(\Gamma_1\ret\Gamma_2)\vdash_D P
$ and
$
\used(\Gamma_2\less x\ret\Gamma_3\less x)\vdash_D Q \ .
$.
In case (iv) we have $\used(\Gamma_2 \ret\Gamma_3)(x)=\cotype{\un\End}{S}$
where $S=\lin\,p_2$ or $S=\un,\End$. 
If $S=\lin p_2$ we know that $\Gamma_2(x)=\cotype{\circ}{\lin\, p_2}$.
We apply Lemma~\ref{lem:alg-strength} and infer both
$\Gamma,x:\lin\,p_1 \vdash P\ret\Gamma_2\less x, x:\circ$ and
$\Gamma_2\less x,x:\lin p_2\vdash Q\ret \Gamma_3\less x:\circ$.
We apply the I.H. and infer the desired result by applying \tpar with the
fourth rule for context splitting:
$
\used(\Gamma,x:\lin\,p_1\ret\Gamma_2\less x, x:\circ)\vdash_D P
$
and 
$  
\used(\Gamma_2\less x,x:\lin p_2\ret \Gamma_3\less x,x:\circ)\vdash_D Q \ .   
$.
Otherwise when $S=\un\,\End$ by strengthening and I.H. we have
$
\used(\Gamma,x:\lin\,p_1\ret\Gamma_2\less x, x:\circ)\vdash_D P
$ and
$
\used(\Gamma_2\less x\ret \Gamma_3\less x)\vdash_D Q
$ 
and we conclude by applying the third rule for context splitting. 
\item{($T=\cotype{ \lin\,p_1 }{\circ},=\cotype{\circ}{\lin\,p_2}$)}.
Similar to the previous case. 
\item{($T=\un\,p,=\cotype{\un p_1}{\un p_2},=\cotype{\un p_1}{\circ},
=\cotype{\circ}{\un\,p_2}$)}.
The result follows by applying the I.H. and the second rule for context
splitting in \tpar to
$
\used(\Gamma_1\ret\Gamma_2)\vdash_D P
$ and
$\used(\Gamma_2\ret\Gamma_3)\vdash_D Q$. 
\end{description}

\item{\aoutl} We have   $ \Gamma_1,x\colon 
\lin\,\OUT TS \vdash
\sendp\ret\Gamma_3,x:\circ $ inferred from
$\Gamma_1,x:\circ\,\vdash v \colon T\ret\Gamma_2$ 
and $\Gamma_2\uplus  x\colon S \vdash P\ret\Gamma_3,x:M$ 
provided $\Un(M)$. 
By strengthening we infer  
$\Gamma_1\vdash v \colon T\ret\Gamma_2\less x$.
By I.H. we infer
$\used(\Gamma_1,x:  S  \ret\Gamma_3,x:M)\vdash_D P$.
Let $\Delta$ be the environment 
$\used(\Gamma_1,x:  S  \ret\Gamma_3,x:M)$ but such that the type for $x$ in $\Delta$
is equal to $ \lin\,\OUT T(S\ret M)$. 
We apply \tout and the fourth rule for context splitting and we conclude:
$
\used(\Gamma_1\ret\Gamma_2\less x)\csplit\Delta$.  
\item{\aoutcll}
We have    
$\Gamma_1,x\colon \cotype{ \lin\,\OUT TS}{N}
      \vdash \sendp\ret\Gamma_3,x:\cotype{\circ}{N}$
      inferred from
$ \Gamma_1,x\colon \lin\,\OUT TS \vdash \sendp\ret\Gamma_3,x:\circ $. 
By I.H. we infer
$\used( \Gamma_1,x\colon \lin\,\OUT TS\ret\Gamma_3,x:\circ)\vdash_D \sendp$
which we rewrite as
$
\used( \Gamma_1\ret\Gamma_3),x:\lin\,\OUT TS\vdash_D \sendp
$. Since $N\ret N$ is unrestricted, by weakening we infer
$
\used( \Gamma_1\ret\Gamma_3),x:\cotype{\lin\,\OUT TS}{N\ret N}\vdash_D \sendp
$. This is the requested result since 
$\used( \Gamma_1\ret\Gamma_3),x:\cotype{\lin\,\OUT TS}{N\ret N}=
\used( \Gamma_1,x\colon \cotype{\lin\,\OUT
TS}{N})\ret\Gamma_3,x:\cotype{\circ}{N}$.

\item{\arepl}
We have  $\Gamma_1 \vdash  !P\ret\Gamma_2$ inferred from
    $\Gamma_1 \vdash P\ret\Gamma_2$ provided 
    $ \Gamma_1=\Gamma_2$.
By I.H. we have   $\used(\Gamma_1\ret\Gamma_1) \vdash_D P$.
Since $\Gamma_1\ret\Gamma_1$ is an unrestricted context, so is 
$\used(\Gamma_1\ret\Gamma_1)$. 
We apply \trepl and we conclude: $\used(\Gamma_1\ret\Gamma_1) \vdash_D !P$.
\item{\ainact}
We apply \tinact and infer $\used(\Gamma\ret\Gamma)\vdash \INACT$.
\end{description} 
\end{proof}

\vspace{-.5em}
By relying on this result we establish the soundness of the algorithm.
 
\begin{corollary}[Soundness]\label{cor:soundness}
If ${\sf typeCheck}(\I,P)$ then $ \I \vdash_D P$.  
\end{corollary}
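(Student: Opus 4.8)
The plan is to unfold the definition of \textsf{typeCheck} and then invoke the main soundness lemma (Lemma~\ref{lem:soundness}), reconciling its conclusion with the statement by checking that the used closure appearing there collapses back to the original context. First I would note that a successful call $\textsf{typeCheck}(\I,P)$ encodes three facts: $\safe(\I)$, since otherwise the inner call to \textsf{safe} aborts; $\textsf{check}(\I,P)$ returns some context $\Gamma$, which — via the correspondence between the \textsf{check} patterns and the declarative rules of Section~\ref{sec:algorithm}, unambiguous by Lemma~\ref{lem:deterministic} and Corollary~\ref{cor:deterministic} — means that the algorithmic judgment $\I\vdash P\ret\Gamma$ is derivable; and $\Un(\Gamma)$, since \textsf{typeCheck} yields its result only after the final \textsf{un} check succeeds. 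From $\safe(\I)$, the judgment $\I\vdash P\ret\Gamma$, and Lemma~\ref{lem:soundness}(2) I obtain $\used(\I\ret\Gamma)\vdash_D P$.

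It then remains to prove the equality $\used(\I\ret\Gamma)=\I$, after which the corollary follows at once. By Lemma~\ref{lem:monotonic}, $\dom(\Gamma)=\dom(\I)$ and $\safe(\Gamma)$; by Lemma~\ref{lem:usedefined}, $\used(\I\ret\Gamma)$ is defined, hence $\I(x)\ret\Gamma(x)$ is defined for every $x\in\dom(\I)$. I would argue pointwise, by case analysis on the shape of $\I(x)$ — recalling that $\I$, being a context of the split-based system, carries no void entry — and using $\Un(\Gamma(x))$ to pin down $\Gamma(x)$. If $\I(x)=\un\,p$, definedness of $\ret$ forces $\Gamma(x)=\un\,p$ and $\used(\un\,p\ret\un\,p)=\un\,p$. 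If $\I(x)=\lin\,p$, then $\Gamma(x)\in\{\lin\,p,\circ\}$, and $\Un(\Gamma(x))$ excludes $\lin\,p$, so $\Gamma(x)=\circ$ and $\used(\lin\,p\ret\circ)=\lin\,p$. The channel cases $\I(x)=\cotype{M}{N}$ reduce componentwise to these two — each linear component closed against $\circ$, each unrestricted component against itself — so $\used(\I(x)\ret\Gamma(x))=\I(x)$ there as well. Thus $\used(\I\ret\Gamma)=\I$, and combining with the first step yields $\I\vdash_D P$.

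Since the genuinely hard part — translating an algorithmic derivation into a derivation of the split-based system $\vdash_D$ — is already discharged by Lemma~\ref{lem:soundness}, the only point requiring care here is the exhaustiveness of the case analysis: one must check that $\Un(\Gamma)$ together with definedness of the used closure really leaves no room for a $\Gamma(x)$ that would make $\used(\I\ret\Gamma)$ differ from $\I$ (for instance a component that stayed $\lin\,p$, ruled out by $\Un$, or one that dropped from $\un\,p$ to $\circ$, ruled out by definedness of $\ret$), and that the absence of void entries in $\I$ is indeed guaranteed by $\I$ being a $\vdash_D$ context. I expect no difficulty beyond this bookkeeping.
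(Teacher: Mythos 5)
Your proposal is correct and takes essentially the same route as the paper: unfold \textsf{typeCheck} to obtain $\safe(\I)$, the algorithmic judgment $\I\vdash P\ret\Gamma$ and $\Un(\Gamma)$, then show $\used(\I\ret\Gamma)=\I$ by a pointwise case analysis on $\I(x)$ and conclude by Lemma~\ref{lem:soundness}. The paper's own argument is terser (it asserts directly that $\Un(\Gamma)$ forces $\Gamma(x)=\circ$ on linear entries), but your extra bookkeeping via Lemmas~\ref{lem:monotonic} and~\ref{lem:usedefined} only makes the same case analysis explicit.
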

\begin{proof}
If the algorithm succeeds then we have 
$\I\vdash P\ret\Gamma$ with $\Un(\Gamma)$. Consider $x\in\dom(\I)$.
If $\I(x)=\lin p$ then we know that $\Gamma(x)=\circ$. Therefore
$\used(\I\ret\Gamma)(x)=\lin p$. Similarly, if $\I(x)=\cotype{\lin p_1}{\lin p_2}$ 
then $\used(\I\ret\Gamma)(x)=\I(x)$. The last possibility is 
$\I(x)=\un p,=\cotype{\un\, p_1}{\un\, p_2}$ and we conclude that 
$\used(\I\ret\Gamma)(x)=\I(x)$. From these facts we infer $\used(\I\ret\Gamma)=\I$.  
The result follows from Lemma \ref{lem:soundness}.
\end{proof}
 
The hypothesis $\safe(\I)$ in ${\sf
typeCheck}(\I,P)$ allows  us to infer that typings are preserved by the system
in Figure \ref{fig:split-system}, in the following sense~\cite{GiuntiV10}. 

\begin{lemma}[Subject reduction]\label{lem:sr}
Assume $\safe(\I)$. If $ \I \vdash_D P$ and $P\Rightarrow P'$ then
$\I'\vdash_D P'$ with $\safe(\I')$. 
\end{lemma}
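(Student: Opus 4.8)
The plan is to follow the subject-reduction argument for the split-based system of \cite{GiuntiV10}, the only genuinely new point being the explicit tracking of the $\safe$ invariant. First I would record the standard auxiliary results for $\vdash_D$: a family of \emph{inversion lemmas} that, from a derivation of $\I\vdash_D P$, recover the shape of $\I$ and the premises forced by the outermost constructor of $P$; a \emph{substitution lemma} stating that $\I_1\vdash_D z:T$ and $\I_2,y:T\vdash_D Q$ imply $\I_1\csplit\I_2\vdash_D Q\subs zy$; and a lemma that typability in $\vdash_D$ is preserved by structural congruence, i.e.\ $\I\vdash_D P$ and $P\equiv Q$ imply $\I\vdash_D Q$. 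The last reduces to commutativity and associativity of $\csplit$, to the treatment of $!P\equiv P\PAR !P$ through \trepl, and to the scope-extrusion axiom, where $x\notin\fv(Q)$ by the variable convention. Then I would prove the statement by induction on the derivation of the reduction step (the extension to iterated reduction $\Rightarrow$ being immediate), with a case analysis on the last rule applied.

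The only substantial case is \rcom, $\SEND x z P\PAR\RECEIVE x y Q\osred P\PAR Q\subs zy$. Inversion of \tpar on $\I\vdash_D\SEND x z P\PAR\RECEIVE x y Q$ gives $\I=\I_1\csplit\I_2$ with $\I_1\vdash_D\SEND x z P$ and $\I_2\vdash_D\RECEIVE x y Q$ (the symmetric split being dual). As $x$ occurs free on both sides, the splitting of $\I$ must distribute the two ends of $x$'s channel type, so $\I(x)=\cotype{\lin\,\OUT TS}{\lin\,\IN TS'}$ with the output end in $\I_1$ and the input end in $\I_2$ (or the $\un$ analogue, treated below; a few further sub-cases arise according to whether an end is presented plainly or folded inside a channel type and which $\csplit$ rule distributed it), and $\safe(\I)$ forces the carried type $T$ to be common to the two ends. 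Inverting \tout (or \toutc) on the sender and \tin (or \tinc) on the receiver yields sub-environments with $\Delta_1\vdash_D z:T$, $\Delta_2,x:S\vdash_D P$ and $\Delta_3,x:S',y:T\vdash_D Q$. Applying the substitution lemma to $\Delta_1\vdash_D z:T$ and $\Delta_3,x:S',y:T\vdash_D Q$ gives $\Delta_1\csplit(\Delta_3,x:S')\vdash_D Q\subs zy$, and recombining with $\Delta_2,x:S\vdash_D P$ through \tpar and the channel-splitting rule produces $\I'=(\Delta_1\csplit\Delta_2\csplit\Delta_3),x:\cotype{S}{S'}\vdash_D P\PAR Q\subs zy$, an environment that differs from $\I$ only in the entry for $x$. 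When the channel is unrestricted, the side conditions $(*)$ and $(**)$ give $S=\un\,\IN TS$ and $S=\un\,\OUT TS$, hence $\cotype{S}{S'}=\I(x)$ and $\I'=\I$. The cases \rres and \rpar follow from the induction hypothesis together with \tres and \tpar, and \rstruct from the induction hypothesis and the congruence-preservation lemma.

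For the $\safe$ component I would carry a parallel invariant: the environment changes only in the \rcom case, where the entry for $x$ evolves from $\cotype{\lin\,\OUT TS}{\lin\,\IN TS'}$ to $\cotype{S}{S'}$; but $\safe(\cotype{\lin\,\OUT TS}{\lin\,\IN TS'})=\safe(T)\land\safe(\cotype{S}{S'})$ by definition of $\safe$, so $\safe(\cotype{S}{S'})$ holds, while in the unrestricted sub-case the entry is unchanged. Every other entry of $\I'$ is, up to $\csplit$, an entry of $\I$, and since $\csplit$ neither invents nor rewrites types, $\safe(\I)$ entails $\safe(\I')$.

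The step I expect to be the main obstacle is the bookkeeping in the \rcom case: making precise how $x$'s channel type is simultaneously split between $\I_1$ and $\I_2$ and then reassembled in $\I'$, and checking the handful of combinations in which the two ends of $x$ are presented as plain end-point types or folded into channel types, handled respectively by \tout/\tin and \toutc/\tinc. This is, however, exactly the computation carried out in \cite{GiuntiV10}, where the subject-reduction theorem for $\vdash_D$ is established and to which the argument otherwise reduces verbatim; here it suffices to thread the $\safe$ hypothesis through that proof.
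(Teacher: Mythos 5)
The paper states this lemma without a local proof, deferring entirely to the subject-reduction theorem for the split-based system in the cited work \cite{GiuntiV10}; your sketch reconstructs exactly that standard argument (inversion, substitution, congruence preservation, induction on the reduction derivation) and correctly identifies the one genuinely new point, namely that $\safe(\cotype{\lin\,\IN TS_1}{\lin\,\OUT TS_2})=\safe(T)\land\safe(\cotype{S_1}{S_2})$ makes the \rcom case preserve the $\safe$ invariant while all other entries of the environment are unchanged up to reassociation of $\csplit$. This matches the paper's intended proof.
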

  
Finally we prove an important result, namely that the algorithm preserves
structural congruence. To tackle the proof, we need a construction similar to
the one of Lemma~\ref{lem:soundness}.    

\begin{lemma}\label{lem:join}
Let $\Gamma_1\vdash P\ret\Gamma_2$. 
We have $\Gamma_1\ret\Gamma_2\vdash P\ret\nabla_{\Gamma_1}$
with   
\[
\nabla_{\Gamma}(x) = 
\left\{
\begin{aligned}
&\circ &\qquad&\Gamma(x)=\lin,p
\\
&\cotype{\circ}{\circ} 
&&\Gamma(x)=\cotype{\lin\,p_1}{\lin\,p_2},
=\cotype{\lin\,p_1}{\circ},=\cotype{\circ}{\lin\,p_2} \\
&\Gamma(x) &&x\in\dom(\Gamma)   
\end{aligned}
\right. 
\]
\end{lemma}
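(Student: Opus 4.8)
The statement concerns the "join" operation $\nabla$ applied to an algorithmic judgment $\Gamma_1 \vdash P \ret \Gamma_2$: the plan is to prove that feeding the used closure $\Gamma_1 \ret \Gamma_2$ as input yields the judgment $\Gamma_1 \ret \Gamma_2 \vdash P \ret \nabla_{\Gamma_1}$. The proof goes by induction on the derivation of $\Gamma_1 \vdash P \ret \Gamma_2$, exactly mirroring the structure of Lemma~\ref{lem:soundness}(2), except that now we stay within the algorithmic system $\vdash$ rather than projecting into $\vdash_D$. The key observation driving the whole argument is that applying the algorithm to $\Gamma_1 \ret \Gamma_2$ re-does, in a single pass, all the consumption that happened between $\Gamma_1$ and $\Gamma_2$: a linear entry that was used (turned to $\circ$ in $\Gamma_2$) appears unchanged in $\Gamma_1 \ret \Gamma_2$ and will again be turned to $\circ$, which is precisely the value recorded by $\nabla_{\Gamma_1}$; a linear entry left untouched becomes $\circ$ in $\Gamma_1 \ret \Gamma_2$ and stays $\circ$; an unrestricted entry is copied through unchanged.

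\textbf{Key steps.} First I would record the base cases: \ainact is immediate since $\Gamma \ret \Gamma \vdash \INACT \ret \Gamma \ret \Gamma$ and $\nabla_\Gamma$ agrees with $\Gamma \ret \Gamma$ on all entries (linear entries of $\Gamma$ become $\circ$ on both sides). For the variable rules I would check, case by case on the shape of the entry for the variable being typed, that the output context produced by the same rule applied to $\Gamma_1 \ret \Gamma_2$ matches $\nabla_{\Gamma_1}$ — here the cases \avarl, \avarlll, \avarlsl/\avarlsr consume a linear entry, turning it to $\circ$, while the unrestricted rules \avaru, \avaruul, etc.\ leave everything fixed, and in each case $\nabla$ is defined to agree. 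For the inductive step, I would handle \apar by composing the two induction hypotheses: from $\Gamma_1 \vdash P \ret \Gamma_2$ we get $\Gamma_1 \ret \Gamma_2 \vdash P \ret \nabla_{\Gamma_1}$, and from $\Gamma_2 \vdash Q \ret \Gamma_3$ we get $\Gamma_2 \ret \Gamma_3 \vdash Q \ret \nabla_{\Gamma_2}$; the composition needs the identity $(\Gamma_1 \ret \Gamma_2) \ret (\Gamma_2 \ret \Gamma_3) = \Gamma_1 \ret \Gamma_3$ together with the fact that running $Q$ on $\nabla_{\Gamma_1}$ (rather than on $\Gamma_2 \ret \Gamma_3$) is legitimate because the two contexts differ only on entries that are void or unrestricted in both and $Q$'s derivation is insensitive to that difference, ending in $\nabla_{\Gamma_1}$ for the whole composite — which is what we want since the linear entries of $\Gamma_1$ are exactly those of $\Gamma_1 \ret \Gamma_3$ refined appropriately. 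For \aoutl, \ainl, \ares, I would run the same rule on the used-closure context, using the induction hypothesis on the continuation and the auxiliary strengthening Lemma~\ref{lem:alg-strength} to reconcile the bound-variable / void bookkeeping, exactly as in the \aoutl and \apar cases of Lemma~\ref{lem:soundness}. For \arepl, from $\Gamma_1 = \Gamma_2$ we get $\Gamma_1 \ret \Gamma_1$ is unrestricted, hence equal to $\nabla_{\Gamma_1}$, and \arepl applies since the input equals the output.

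\textbf{Main obstacle.} The delicate point is the \apar case, specifically justifying that the second premise's derivation $\Gamma_2 \ret \Gamma_3 \vdash Q \ret \nabla_{\Gamma_2}$ can be "rebased" onto $\nabla_{\Gamma_1}$ — i.e.\ that replacing certain linear-or-void entries that $P$ has already consumed by their $\nabla_{\Gamma_1}$-images does not disturb $Q$'s typing and gives output $\nabla_{\Gamma_1}$ rather than $\nabla_{\Gamma_2}$. This requires a small context-equivalence lemma stating that the algorithmic judgment is invariant under replacing, in both input and output simultaneously, any entry by another entry that is $\circ$ or unrestricted in both — together with the arithmetic fact relating $\ret$, $\nabla$, and $\uplus$. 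The same subtlety appears, in milder form, in the output and input cases where the continuation's context is threaded through $\uplus x\colon S$ and then back, and is handled by the strengthening lemma. Apart from this bookkeeping the proof is a routine structural induction parallel to Lemma~\ref{lem:soundness}.
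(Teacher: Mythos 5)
The paper states Lemma~\ref{lem:join} without proof, so your structural induction mirroring Lemma~\ref{lem:soundness}(2) is a reasonable plan, and your treatment of \ainact, the variable rules, \arepl, and the input/output/restriction cases (via Lemmas~\ref{lem:alg-strength} and~\ref{lem:update}) is essentially what is needed. However, your \apar case — which you correctly identify as the crux — contains a genuine error in both of its two key claims. First, the identity $(\Gamma_1\ret\Gamma_2)\ret(\Gamma_2\ret\Gamma_3)=\Gamma_1\ret\Gamma_3$ is not even well defined: take a single entry with $\Gamma_1(x)=\Gamma_2(x)=\lin\,p$ and $\Gamma_3(x)=\circ$ (a linear resource untouched by $P$ and consumed by $Q$); then $(\Gamma_1\ret\Gamma_2)(x)=\circ$ while $(\Gamma_2\ret\Gamma_3)(x)=\lin\,p$, and $\circ\ret\lin\,p$ has no defining clause. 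Second, and more seriously, $Q$ cannot be ``rebased'' onto $\nabla_{\Gamma_1}$: the entries on which $\nabla_{\Gamma_1}$ and $\Gamma_2\ret\Gamma_3$ differ are precisely the linear entries that $Q$ still has to consume (void in $\nabla_{\Gamma_1}$, linear in $\Gamma_2\ret\Gamma_3$), and no pattern of the algorithm fires on a void entry. Concretely, with $P=\INACT$ and $Q=\SENDn{x}{v}$ for linear $x$, the intermediate context after $P$ in the derivation of $\Gamma_1\ret\Gamma_3\vdash P\PAR Q\ret\nabla$ must still carry $x:\lin\,\OUT T{\un\,\End}$, whereas $\nabla_{\Gamma_1}(x)=\circ$.

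The composition has to go the other way: weaken $P$'s run rather than restrict $Q$'s. The correct algebra is $(\Gamma_1\ret\Gamma_2)\uplus(\Gamma_2\ret\Gamma_3)=\Gamma_1\ret\Gamma_3$ (an entrywise check, visible in columns 4--6 of Figure~\ref{fig:tableaux}), so Lemma~\ref{lem:update}(2) applied to the induction hypothesis $\Gamma_1\ret\Gamma_2\vdash P\ret\nabla_{\Gamma_1}$ yields $\Gamma_1\ret\Gamma_3\vdash P\ret\nabla_{\Gamma_1}\uplus(\Gamma_2\ret\Gamma_3)$, and since $\nabla\uplus\Gamma=\Gamma$ this output is exactly $\Gamma_2\ret\Gamma_3$, to which the induction hypothesis for $Q$ applies directly, giving $\Gamma_2\ret\Gamma_3\vdash Q\ret\nabla_{\Gamma_2}=\nabla_{\Gamma_1}$. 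This is the same $\uplus$-manipulation used in the proof of Lemma~\ref{lem:congruence}; with the intermediate context corrected to $\Gamma_2\ret\Gamma_3$ (and the analogous correction where the value judgment and continuation are threaded in the output cases), your induction goes through.
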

\noindent
Given $\Gamma_1,\Gamma_2$ with the same domain we define the update of contexts
$\Gamma_1,\Gamma_2$ as the operation below:
\[
\Gamma_1\uplus\Gamma_2 =
\left\{
\begin{aligned}
&M_1\uplus M_2 &\qquad&  \Gamma_1(x)=M_1, \Gamma_2(x)=M_2
\\
&\cotype{M_1\uplus N_1}{M_2\uplus N_2} &&  \Gamma_1(x)=\cotype{M_1}{N_1},
\Gamma_2(x)=\cotype{M_2}{N_2}  
\end{aligned}
\right.
\]

\begin{lemma}[Algorithmic weakening]\label{lem:update}
Let $\Gamma_1\vdash P\ret\Gamma_2$. The following hold.
\begin{enumerate}
 \item if $x\not\in\dom(\Gamma)$ then 
 (i)  $\Gamma_1,x:M\vdash P\ret\Gamma_2,x:M $ and 
 (ii) $\Gamma_1,x:\cotype MN\vdash  P\ret\Gamma_2,x:\cotype MN$;
 \item if $\Gamma_1\uplus\Gamma$ is defined then
       $\Gamma_1\uplus\Gamma\vdash P\ret\Gamma_2\uplus\Gamma$.
\end{enumerate}
\end{lemma}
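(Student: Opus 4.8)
The plan is to prove both parts by a single induction on the length of the derivation of $\Gamma_1\vdash P\ret\Gamma_2$, strengthening the statement so that it also covers value judgments $\Gamma_1\vdash v\colon T\ret\Gamma_2$; the value case is needed because the process patterns for output (\aoutl, \aoutcll, \aoutclr, \aoutu and the channel variants) invoke {\sf checkVar}, and the variable patterns form the base of the recursion. I would also first correct part~1: the side condition should read $x\notin\dom(\Gamma_1)$ (equivalently, by Lemma~\ref{lem:monotonic}, $x\notin\dom(\Gamma_2)$), so that $x\colon M$ and $x\colon\cotype MN$ are genuinely fresh.

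For part~1 the key observation is that a fresh entry is \emph{inert}. The declarative rules of Section~\ref{sec:algorithm} single out the variable they act on but locate the rest of the context as a passive component (premises of the form $\Gamma=\Delta_1,z\colon M,\Delta_2$, $\Gamma_1,x\colon\lin\,\OUT TS$, etc.), so inserting $x\colon M$ into that passive component blocks no pattern and, by Lemma~\ref{lem:deterministic}, enables no competing one; and by Barendregt's convention $x$ differs from every bound name of $P$, so the entry stays fresh across the sub-derivations opened by \ainl, \ainu and \ares. The cases are then routine: \ainact is immediate; for \apar and the nested output/input patterns one threads $x\colon M$ through every premise by the induction hypothesis; for \arepl one notes that adding $x\colon M$ to both sides preserves the equality $\Gamma_1=\Gamma_2$; for \aoutl and its channel variants {\sf checkVar} on the enriched context matches exactly the same variable pattern; for \ainl, \aincll, \ainclr, \ainu, \ainclu, \aincru and \ares one applies the induction hypothesis with both the fresh $x\colon M$ and the rule's bound name present, observing that the closing operations ({\sf remove}, {\sf setVoid}, {\sf unVar}) concern only that bound name and leave $x\colon M$ in the output. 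The base cases — the variable patterns — are immediate by the same inertness argument, and part~(ii) is identical with $\cotype MN$ in place of $M$.

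For part~2 the point is that $\Gamma_1\uplus\Gamma$ being defined forces, componentwise, every non-void component of $\Gamma$ to sit above a void component of $\Gamma_1$. Since a pattern that fires on a variable $x$ requires a non-void (indeed linearly shaped) entry for $x$, the enrichment $\Gamma$ must coincide with the void pattern on $x$, leaving $\Gamma_1(x)$ unchanged, whereas on every other variable $\Gamma$ merely fills voids — and a void entry is never inspected by any rule (neither {\sf checkVar} nor the input/output patterns) except to be filled, via $\uplus$, on the input side of a sub-derivation. One then argues by induction that this filling propagates unchanged to the output: \ainact is trivial; \apar pushes $\Gamma$ through both premises; \arepl preserves $\Gamma_1=\Gamma_2$ under $\uplus\Gamma$; and for the output, input and restriction patterns one uses that context update is defined pointwise, so $\uplus\Gamma$ commutes with the internal operations appearing in those rules (e.g.\ in \aoutl one rewrites $(\Gamma_2\uplus\Gamma)\uplus x\colon S=(\Gamma_2\uplus x\colon S)\uplus\Gamma$ on the premise and $(\Gamma_3\uplus\Gamma),x\colon\circ=(\Gamma_3,x\colon\circ)\uplus\Gamma$ on the conclusion).

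I expect the main obstacle to be the bookkeeping in part~2: in each inductive case one must separate the variable $x$ that the top rule acts on — where $\Gamma$ must agree with the void pattern so that the rule still applies verbatim — from the remaining variables, where $\Gamma$ only fills voids, and then verify that every internal context operation of the rule ($\uplus$, {\sf remove}, {\sf setVoid}, together with the void-slot substitutions hidden in the channel patterns \aoutcll--\aincru) commutes with $\uplus\Gamma$. The channel-type patterns roughly double the case analysis, and one has to keep Lemma~\ref{lem:monotonic} (domain preservation) at hand to pin down the shapes of the intermediate contexts $\Gamma_2,\Gamma_3$.
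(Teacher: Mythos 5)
The paper states Lemma~\ref{lem:update} without proof, so there is no official argument to measure yours against; your induction on the length of the derivation, strengthened to cover the value judgments produced by {\sf checkVar}, is the routine argument the author is implicitly appealing to, and your correction of the side condition in part~1 to $x\notin\dom(\Gamma_1)$ is right. Part~1 is sound as sketched: freshness together with Barendregt's convention makes the added entry inert in every pattern, and the equality $\Gamma_1=\Gamma_2$ demanded by \arepl survives the extension.

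Part~2, however, contains a false step. You assert that a void entry ``is never inspected by any rule except to be filled'', but the patterns \avarusr and \avarusl carry the negative side conditions $\un\,p\neq N$ and $\un\,p\neq M$, which do inspect the sibling component of the entry being used. Concretely, take $p\neq\End$, let $S=\mu a.\un\,\OUT{(\un\,p)}a$, and set $\Gamma_1=x\colon\cotype{\un\,p}{\circ},\,z\colon S$ and $\Gamma=x\colon\cotype{\circ}{\un\,p},\,z\colon\circ$. Then $\Gamma_1\vdash\SENDn zx\ret\Gamma_1$ holds via \aoutu and \avarusr, and $\Gamma_1\uplus\Gamma$ is defined componentwise, but in $\Gamma_1\uplus\Gamma$ the entry for $x$ becomes $\cotype{\un\,p}{\un\,p}$, which no variable pattern matches at type $\un\,p$: \avarusr and \avarusl both fail their side conditions and \avaruend covers only $\End$. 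So the updated judgment is not derivable and part~2, read literally, is false. The repair is to require that the update keep every entry $\safe$ --- the offending entry $\cotype{\un\,p}{\un\,p}$ is unsafe --- which is exactly the situation in the only place the lemma is used (the table in the Appendix supporting Lemma~\ref{lem:congruence}); your induction then goes through once this hypothesis is threaded along. Separately, your characterization of when $\Gamma_1\uplus\Gamma$ is defined is too narrow: that same table requires $\un\,p\uplus\un\,p=\un\,p$ in the rows for unrestricted entries, so the update must also be admitted on matching unrestricted components --- harmlessly for your argument, since such entries are left unchanged.
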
 

\begin{lemma}[Structural congruence]\label{lem:congruence}
Assume $P\equiv Q$. We have  
$\Gamma_1\vdash P\ret\Gamma_2$ if and only if 
$\Gamma_1\vdash Q\ret\Gamma_2$.
\end{lemma}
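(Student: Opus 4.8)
The statement is a biconditional, and since structural congruence $\equiv$ is generated by a handful of axioms (plus it is an equivalence relation and a congruence), the plan is to proceed by induction on the derivation of $P\equiv Q$. The reflexive, symmetric and transitive cases are immediate (using the induction hypothesis in each direction, which is why proving the biconditional rather than a single implication is convenient), and the congruence (context) closure cases follow by inspecting the relevant pattern rule and applying the induction hypothesis to the affected subprocess — here one uses that the algorithmic rules are syntax-directed, so a derivation of $\Gamma_1\vdash C[P]\ret\Gamma_2$ decomposes uniquely through the hole. So the real content is to check each axiom of $\equiv$ individually, showing that the two sides admit exactly the same input/output judgements.

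First I would dispatch the three monoid laws for parallel composition. Commutativity $P\PAR Q\equiv Q\PAR P$ is the only subtle one: rule \apar threads the context through $P$ then $Q$, so I must show that $\Gamma_1\vdash P\ret\Gamma_2\vdash Q\ret\Gamma_3$ holds iff $\Gamma_1\vdash Q\ret\Gamma'_2\vdash P\ret\Gamma_3$ for a suitable intermediate $\Gamma'_2$; this is exactly the "order makes no difference" claim promised in Section~\ref{sec:algorithm}, and it is the main obstacle. I would prove it by showing that the net effect of checking $P$ is to consume precisely the linear resources in $\fv(P)$ used linearly by $P$ (setting them to $\circ$), to leave unrestricted and void entries untouched, and — crucially — to be insensitive to which linear entries are already $\circ$ versus still present as long as the ones $P$ needs are present; this "frame-like" property lets one reorder. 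Lemma~\ref{lem:monotonic} (domains and safety preserved) and Lemma~\ref{lem:update} (algorithmic weakening, so that an entry $P$ does not touch can be added before or after) are the workhorses here, together with Lemma~\ref{lem:alg-strength} to strip spent linear entries when comparing the two interleavings. Associativity then follows by two applications of the \apar decomposition, and $P\PAR\INACT\equiv P$ is immediate from \ainact (which forwards its context unchanged).

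Next, $!P\equiv P\PAR !P$: the left side via \arepl forces $\Gamma_1\vdash P\ret\Gamma_1$, and then the right side is $\Gamma_1\vdash P\ret\Gamma_1\vdash !P\ret\Gamma_1$ by \apar and \arepl again, and conversely; so both directions reduce to the fixed-point condition $\Gamma_1=\Gamma_2$ in \arepl. For the scope-extrusion law $\NR{x:T}P\PAR Q\equiv\NR{x:T}(P\PAR Q)$, I would unfold \ares on both sides: the left gives $\safe(T)$, $\Gamma_1,y{:}T\vdash P\ret\Gamma_2,y{:}O$ with $\Un(O)$, then $\Gamma_2\vdash Q\ret\Gamma_3$; the right gives $\safe(T)$, then $\Gamma_1,y{:}T\vdash P\PAR Q\ret\Gamma_3,y{:}O'$ with $\Un(O')$. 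To line these up I use Lemma~\ref{lem:update}(1) to add the entry $y{:}T$ to the judgement for $Q$ (legitimate since, by Barendregt's convention, $y\notin\fv(Q)$) and then Lemma~\ref{lem:alg-strength} items (1),(4),(7),(8) — depending on the qualifier of $T$ — to remove it again, and Lemma~\ref{lem:monotonic} to track that $y$'s type stays unrestricted throughout $Q$. Exchange of restrictions $\NR{x:T_1}\NR{y:T_2}P\equiv\NR{y:T_2}\NR{x:T_1}P$ follows because two nested \ares applications just add $\{x,y\}$ to the context and remove them, and the order of that is irrelevant, again appealing to Lemma~\ref{lem:update}(1). Finally the two garbage-collection laws $\NR{x:\un\,p}\INACT\equiv\INACT$ and $\NR{x:\cotype{\un\,p_1}{\un\,p_2}}\INACT\equiv\INACT$: on the left \ares requires $\safe(\un\,p)$ (resp.\ $\safe(\cotype{\un\,p_1}{\un\,p_2})$, which holds by the second clause of $\safe$ precisely when one component is $\un\,\End$ — note the congruence law is stated for arbitrary $\un\,p_i$, so here I would either invoke that the reachable configurations keep it safe or restrict attention to safe annotations as the paper does) followed by \ainact giving $\Gamma_1,x{:}T\vdash\INACT\ret\Gamma_1,x{:}T$ with $\Un(T)$, hence $\Gamma_1\vdash\NR{x:T}\INACT\ret\Gamma_1$; the right side is just $\Gamma_1\vdash\INACT\ret\Gamma_1$ by \ainact, and the two match.
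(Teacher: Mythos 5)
Your overall plan — induction on the derivation of $P\equiv Q$, with the real work concentrated in the axioms and above all in commutativity of parallel composition — matches the paper's, and your handling of the remaining axioms (replication via the fixed-point condition of \arepl, scope extrusion and restriction exchange via algorithmic weakening and strengthening, the garbage-collection laws via \ainact) is essentially what the paper does. The gap is in the one case you yourself identify as the main obstacle: for $P\PAR Q\equiv Q\PAR P$ you assert a ``frame-like'' insensitivity property and say it ``lets one reorder,'' but that property, made precise, \emph{is} the content of the lemma, and your sketch neither states it precisely nor constructs the intermediate context $\Gamma_2'$. The paper does this with two pieces of machinery you do not have. First, Lemma~\ref{lem:join}: from $\Gamma_1\vdash P\ret\Gamma_2$ one gets $\Gamma_1\ret\Gamma_2\vdash P\ret\nabla_{\Gamma_1}$, where the used closure $\Gamma_1\ret\Gamma_2$ records exactly the resources $P$ consumed (unchanged linear entries become $\circ$, consumed ones are retained) and $\nabla$ is the fully-consumed output. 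This is the precise form of your ``net effect'' claim; the strengthening items of Lemma~\ref{lem:alg-strength} you propose to use instead operate entry-by-entry on specific type shapes and do not by themselves yield a context that types $P$ with \emph{all} of its linear output voided. Second, the intermediate context is constructed explicitly as the solution $\Gamma_4$ of the system $\Gamma_1=(\Gamma_2\ret\Gamma_3)\uplus\Gamma_4$ and $\Gamma_4=(\Gamma_1\ret\Gamma_2)\uplus\Gamma_3$, whose existence is not obvious and is verified by an exhaustive case table over safe entries (Figure~\ref{fig:tableaux}); one then applies Lemma~\ref{lem:update}(2) to $\Gamma_2\ret\Gamma_3\vdash Q\ret\nabla$ and to $\Gamma_1\ret\Gamma_2\vdash P\ret\nabla$, using $\nabla\uplus\Gamma=\Gamma$, to obtain $\Gamma_1\vdash Q\ret\Gamma_4$ and $\Gamma_4\vdash P\ret\Gamma_3$ before reapplying \apar.

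Without Lemma~\ref{lem:join} (or an equivalent statement proved by induction on the algorithmic derivation) and without exhibiting $\Gamma_4$, the commutativity case does not go through: Lemma~\ref{lem:update} alone lets you \emph{add} a frame to a judgement, but you first need a judgement for $Q$ over a context small enough that adding the frame lands you on $\Gamma_1$ rather than on $\Gamma_2$, and that is exactly what the used-closure lemma provides. One genuinely good observation in your sketch that the paper glosses over is the mismatch between the garbage-collection axiom $\NR{x:\cotype{\un\,p_1}{\un\,p_2}}\INACT\equiv\INACT$, stated for arbitrary unrestricted $p_i$, and the $\safe$ predicate required by \ares, which for a channel entry demands one component to be $\circ$ or $\un\,\End$; the left-to-right direction of the lemma for that axiom needs this to be addressed, as you note.
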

\vspace{-.5em}
\begin{proof}
The most interesting case is parallel composition. 
Assume $\Gamma_1\vdash P\PAR Q\ret\Gamma_3$ inferred from  
$\Gamma_1\vdash P\ret\Gamma_2$ and
$\Gamma_2\vdash Q\ret\Gamma_3$. 
By Lemma~\ref{lem:join} we have
$\Gamma_1\ret\Gamma_2\vdash P\ret\nabla_{\Gamma_1}$ and
$\Gamma_2\ret\Gamma_3\vdash Q\ret\nabla_{\Gamma_2}$. 
In fact, it holds $\nabla_{\Gamma_1}=\nabla=\nabla_{\Gamma_2}$.
Let $\Gamma_4$ be the solution of the linear system defined by  
equations
$\Gamma_1=(\Gamma_2\ret\Gamma_3)\uplus\Gamma_4$ and
$\Gamma_4=(\Gamma_1\ret\Gamma_2)\uplus\Gamma_3$.  
Such a solution does exist (see the Appendix).
By $\Gamma_2\vdash Q\ret\Gamma_3$ and Lemma~\ref{lem:join}
we infer $\Gamma_2\ret\Gamma_3\vdash Q\ret\nabla$. By using 
Lemma~\ref{lem:update} we have
$
\Gamma_2\ret\Gamma_3\uplus\Gamma_4\vdash Q\ret\nabla\uplus\Gamma_4 
$.
Next take $\Gamma_1\ret\Gamma_2\vdash P\ret\nabla$ obtained by applying
Lemma~\ref{lem:join} to
$\Gamma_1\vdash P\ret\Gamma_2$.
We apply Lemma~\ref{lem:update} and infer
$
(\Gamma_1\ret\Gamma_2)\uplus\Gamma_3 \vdash P\ret\nabla\uplus\Gamma_3
$.
Since the update of $\nabla$ with a type environment $\Gamma$, whenever defined, 
satisfies the equation $\nabla\uplus\Gamma=\Gamma$, the judgments above could be
rewritten as $\Gamma_1\vdash Q\ret\Gamma_4$ and 
$\Gamma_4\vdash P\ret\Gamma_3$.
We  apply  \apar and obtain 
$\Gamma_1\vdash Q\PAR P\ret\Gamma_3$, as required. 
The other direction for the parallel case is analogous.
The second rule for congruence of parallel processes is straightforwardly obtained
from the definition of \apar.
The cases for replication and inaction follow easily from the fact that the
context received in output is equal to the context received in input.
The cases for scope restriction follow  from the definition
of \ares and from algorithmic strengthening and weakening
(Lemmas~\ref{lem:alg-strength} and ~\ref{lem:update}). To illustrate, take the rule
$\NR{x:\un\,p}\INACT\equiv \INACT$. Assume $\Gamma\vdash \INACT\ret\Gamma$ 
and let $x\not\in\dom(\Gamma)$, eventually by alpha-renaming $x$ in the left process.
By weakening we infer  $\Gamma, x:\un\,p\vdash \INACT\ret\Gamma, x:\un\,p$. 
We apply \ares and conclude:
$\Gamma \vdash\NR{x:\un\,p}\INACT\ret\Gamma$. Now assume
$\Gamma\vdash \NR{x:\un\,p}\INACT\ret\Gamma_1$ inferred
from $\Gamma,{x:\un\,p}\vdash\INACT\ret\Gamma_1, x:O$. 
From the fact that this judgment has been inferred by using \ainact, we infer
$\Gamma_1=\Gamma$ and $O=\un\,p$. Since  $x\not\in\fv(\INACT)$, by applying
strengthening we infer the desired result, $\Gamma\vdash\INACT\ret \Gamma$.
\end{proof}
  
\begin{theorem}
The ${\sf typeCheck}$ algorithm is effective for establishing a session-based \
type discipline. 
\end{theorem}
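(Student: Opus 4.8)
The plan is to read the theorem as the conjunction of three properties which together amount to ``establishing a session-based type discipline'': (i) the procedure is \emph{effective}, i.e.\ ${\sf typeCheck}(\I,P)$ always halts, returning a context or raising an exception, so that acceptance is decidable; (ii) every accepted process is well typed in a system enjoying subject reduction, so that the linear/unrestricted protocol described by the session types is respected along every computation; and (iii) acceptance is invariant under structural congruence, so that the discipline is imposed on processes up to the structural laws of the calculus. Each of (i)--(iii) is obtained by assembling results already available in the paper, the nontrivial mathematical content having been discharged in Lemma~\ref{lem:soundness}.

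For (i), I would first invoke Corollary~\ref{cor:deterministic}: at every recursive invocation of {\sf check} arising during ${\sf typeCheck}(\I,P)$ the context is safe and at most one pattern matches, so the computation is deterministic and no backtracking occurs. Termination then follows by induction on the structure of the process argument, since in each pattern the recursive calls to {\sf check} (and the calls to {\sf checkVar}) are on \emph{strict syntactic subterms} of the input process, while the auxiliary operations {\sf safe}, {\sf unVar}, {\sf remove} and {\sf setVoid} act on a context of fixed finite domain (Lemma~\ref{lem:monotonic}) and terminate. The only delicate point is that the equality tests $p=r$ inside {\sf checkVar} and the comparisons performed by {\sf safe} are decidable: under the equi-recursive reading of types they are equalities of regular infinite trees, decidable because recursive types are contractive and unfolding therefore terminates. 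Hence ${\sf typeCheck}$ is a total algorithm: it halts either normally, returning $\Gamma$ with $\Un(\Gamma)$, or by raising the pattern/{\sf safe}/{\sf unVar} exception.

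For (ii), if ${\sf typeCheck}(\I,P)$ succeeds then in particular the outer call to {\sf safe} did not fail, so $\safe(\I)$, and by Corollary~\ref{cor:soundness} we obtain $\I\vdash_D P$ in the split-based system of Figure~\ref{fig:split-system}. By Lemma~\ref{lem:sr} (subject reduction), whenever $P\Rightarrow P'$ there is $\I'$ with $\safe(\I')$ and $\I'\vdash_D P'$; iterating, this holds for every $P'$ reachable by reduction from $P$. Since $\vdash_D$ is precisely the session type system shown in~\cite{GiuntiV10} to guarantee the intended communication discipline, this is the sense in which ${\sf typeCheck}$ \emph{establishes} it. For (iii), Lemma~\ref{lem:congruence} gives directly that $P\equiv Q$ implies ${\sf typeCheck}(\I,P)$ succeeds with output $\Gamma$ iff ${\sf typeCheck}(\I,Q)$ does, so acceptance is well defined on congruence classes. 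Combining (i)--(iii) yields the statement.

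I expect the only genuine subtlety to be the precise reading of ``effective for establishing a session-based type discipline'': the substantive work is already in hand (Corollary~\ref{cor:soundness}, Lemma~\ref{lem:sr}, Lemma~\ref{lem:congruence}, resting on Lemma~\ref{lem:soundness}), and what remains is the termination argument of part~(i), whose single non-routine step is the decidability of the type-equality checks hidden inside {\sf checkVar} and {\sf safe}, justified by contractiveness.
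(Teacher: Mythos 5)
Your proposal is correct and follows essentially the same route as the paper, whose entire proof is to combine Corollary~\ref{cor:deterministic}, Corollary~\ref{cor:soundness}, Lemma~\ref{lem:sr} and Lemma~\ref{lem:congruence} --- exactly the four results you assemble for parts (i)--(iii). Your additional discussion of termination and of the decidability of the type-equality tests via contractiveness is a reasonable elaboration of a point the paper leaves implicit, but it does not change the structure of the argument.
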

\vspace{-.5em}
\begin{proof}
Apply Corollaries~\ref{cor:deterministic},~\ref{cor:soundness} and
Lemmas~\ref{lem:sr} and ~\ref{lem:congruence}. 
\end{proof}

\subsection{Towards semantic completeness}

The algorithm is unable to type check some process that is  
typable by the type system  in Figure~\ref{fig:split-system}. 
This is trivially true for all processes typed by unsafe contexts, 
but also for typings of the form:
\begin{align*}
&\Gamma,x:\cotype{\lin \IN TS}{\lin \OUT T{\dual S}}\vdash_D x().P
&&P\equiv C[\SEND{x}{}P'] 
\\[2mm]
&\Gamma,x:\cotype{\lin \IN TS}{\lin \OUT T{\dual S}}\vdash_D
\SENDn{x}{x} 
&&T=\lin \IN TS\ .
\end{align*}

\noindent
As argued in other works on session types 
(e.g.~\cite{GaySJ:substp,castagna-et-al:foundations-session-types}),  
it seems that ruling out such processes does not comport an issue since they appear
to be deadlocked.
To deploy a formal proof of this statement, we have developed a typed observational 
theory  where the behavior of processes is contrasted w.r.t. the typed knowledge of
the observer~\cite{bisim-sessions}.
The discerning capability of the observer is
regulated by the type checker; in particular, type checking forces  contexts to not
interfere with a session shared by two participants.
Behaviorally equivalent pi calculus processes exhibit the same observables in all
type checked contexts. To avoid universal quantification,  we rely  
on a proof technique based on bisimulation over typed labelled semantics. 
 
The aim is to prove that if $\I\vdash_D P$ has been inferred by using
\tinc or \toutc with a linear channel  type, then $P$ is indistinguishable from 
$\INACT$ in all contexts  type checked by 
a type environment  $Y$ {\it compatible} with $\I$, noted $Y\models P\cong \INACT$.
To illustrate, assume that by applying \tinc we infer  
$\I,x\colon\cotype{\lin\IN TS}{\lin\OUT T{\dual S}}\vdash_D \receivep$.
Intuitively, a process type checked by $Y$ cannot tell apart the input process from 
$\INACT$ because interaction on $x$ is forbidden by $Y$;
the compatibility condition enforces the type environment $Y$ to do not contain 
input or output capabilities of $x$, which are already used in a linear way in $\I$.
Once obtained this result, we should be able to prove our algorithm to be
semantically complete, in the following sense.

\begin{claim}[Completeness]
If $\I_1\vdash_D P_1$ then there are a type environment $\I_2$ and a process $P_2$
s.t. ${\sf typeCheck}(\I_2,P_2)$
and $Y\models P_1\cong P_2$ with $Y$ a type environment compatible with
both $\I_1$ and $\I_2$.
\end{claim}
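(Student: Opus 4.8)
The plan is to build $\I_2$ and $P_2$ from $\I_1$ and $P_1$ by a structural transformation that simultaneously (i) repairs the features that cause the algorithm to reject — namely unsafe channel types and the ``both-ends-linear'' typings flagged in Section~\ref{sec:soundness} — and (ii) leaves the behaviour of the process unchanged up to $\cong$ in compatible contexts. First I would recall from the discussion above that a derivation $\I_1\vdash_D P_1$ can only fail to be mirrored by the algorithm for one of two reasons: either some channel type occurring in the derivation (in $\I_1$ itself or introduced by \tres) is not \safe, or else the derivation uses \tinc/\toutc on a type $\cotype{\lin\IN TS}{\lin\OUT T{\dual S}}$ in which both end points are linear — the case the observational theory of~\cite{bisim-sessions} shows to be behaviourally inert. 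I would proceed by induction on the derivation of $\I_1\vdash_D P_1$, maintaining the invariant that the constructed $\I_2$ is \safe and that $Y\models P_1\cong P_2$ for every $Y$ compatible with both.

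For the inert cases, where \tinc or \toutc is applied to a doubly-linear channel type, the transformation replaces the offending subterm $x().P'$ (resp.\ $\SENDn xx.P'$) by $\INACT$ at that point and correspondingly replaces the type $\cotype{\lin\IN TS}{\lin\OUT T{\dual S}}$ by $\cotype{\un\,\End}{\un\,\End}$ in $\I_2$; the resulting judgment is derivable using \tinact and \tres, it is \safe by construction, and the equivalence $Y\models P_1\cong P_2$ is exactly the statement proved in~\cite{bisim-sessions} that such a process is indistinguishable from $\INACT$ in all $Y$-typed contexts, the compatibility condition on $Y$ guaranteeing that no observer holds the input/output capability on $x$. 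For a \tres node $\NR{x\colon T}R$ with $T$ unsafe, I would similarly coerce $T$ to a \safe type of the same qualifier class — e.g.\ $\cotype{\un\,\End}{\un\,\End}$ when $T$ is an unrestricted channel type, which is the case permitted by structural congruence $\NR{x\colon\cotype{\un\,p_1}{\un\,p_2}}\INACT\equiv\INACT$ — and recurse into $R$; when $x$ genuinely carries linear traffic the unsafety is itself a symptom that the session is deadlocked, and the same erasure-to-$\INACT$ move applies. All remaining rules (\tinact, \tpar, \trepl, the ordinary \tin/\tout, and the value rules) commute with the transformation: for \tpar one splits $\I_2$ the same way one splits $\I_1$, appealing to Corollary~\ref{cor:soundness} and Lemma~\ref{lem:congruence} (in particular the fact that \apar imposes no real ordering constraint) to land in the algorithmic system; for \trepl one uses that the transformation preserves the ``output equals input context'' shape. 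Having produced $\I_2\vdash P_2\ret\Gamma$ in the algorithmic system with $\safe(\I_2)$ and $\Un(\Gamma)$, ${\sf typeCheck}(\I_2,P_2)$ succeeds by definition.

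The main obstacle I anticipate is assembling a single type environment $Y$ that is compatible with \emph{both} $\I_1$ and $\I_2$ while witnessing $P_1\cong P_2$ uniformly across all the local erasures performed by the induction. Compatibility as sketched above forbids $Y$ from holding the input/output capability on any channel that $\I_i$ uses linearly, and since the transformation changes some entries from $\cotype{\lin\IN TS}{\lin\OUT T{\dual S}}$ to $\cotype{\un\,\End}{\un\,\End}$, one must check that a $Y$ adequate for $\I_1$ remains adequate for $\I_2$ and vice versa — this is where the precise definition of compatibility from~\cite{bisim-sessions} has to be pinned down and shown stable under the coercions. A secondary difficulty is that $\cong$ must be proved a congruence for the pi-calculus constructs so that the local replacements compose into a global equivalence; this is the standard congruence property of the typed bisimilarity of~\cite{bisim-sessions}, which I would invoke rather than re-prove. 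Modulo those two points — the robustness of compatibility and the congruence of $\cong$ — the argument is a routine induction glued together by Corollaries~\ref{cor:deterministic} and~\ref{cor:soundness} and Lemmas~\ref{lem:sr} and~\ref{lem:congruence}.
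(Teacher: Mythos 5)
Your strategy coincides with the one the paper itself sketches for this (unproven) Claim: descend the derivation of $\I_1\vdash_D P_1$, replace the subtrees rooted at \tinc/\toutc applications on a doubly-linear type $\cotype{\lin\IN TS}{\lin\OUT T{\dual S}}$ by $\INACT$, and discharge the behavioural equivalence by the typed bisimulation and contextuality of $\cong$ from~\cite{bisim-sessions}. However, your construction of $\I_2$ misses the very point the paper identifies as the crux. You only retype the erased channel $x$ as $\cotype{\un\,\End}{\un\,\End}$; but the erased continuation may be the sole consumer of \emph{other} linear entries of $\I_1$ (and, via the splitting rules, of linear entries routed exclusively into that subtree). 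After erasure those entries are never used, so the declarative leaf $\I_2'\vdash_D\INACT$ fails its side condition $\Un(\I_2')$, and on the algorithmic side the top-level context returned by {\sf check} retains un-voided linear typings, so the final $\Un$ test in {\sf typeCheck} rejects $P_2$. This is exactly why the paper says one cannot take $\I_1=\I_2$ in a linear (as opposed to affine) setting, and why its sketch lets $\I_2$ contain only the \emph{unrestricted} typings of $\I_1$, pruning all linear entries rather than just the one for $x$. Your $\I_2$ as defined does not satisfy ${\sf typeCheck}(\I_2,P_2)$.

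Two further points. First, your opening premise --- that a derivation can fail to be mirrored by the algorithm \emph{only} because of an unsafe type or a doubly-linear \tinc/\toutc --- is asserted, not argued; establishing that these are the only sources of incompleteness is the substantive mathematical content of the Claim, and neither you nor the paper proves it. Second, your treatment of unsafe \tres annotations is incorrect as stated: a type such as $\cotype{\lin\IN{T_1}{S}}{\lin\OUT{T_2}{S'}}$ with $T_1\neq T_2$ types processes in which a communication actually fires (delivering a value at the wrong type), so such a process is not behaviourally inert and cannot be replaced by $\INACT$ up to $\cong$; unsafety signals a potential runtime type error, not a deadlock. The paper sidesteps this by treating unsafe contexts as trivially outside the algorithm's scope rather than folding them into the erasure.
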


The idea is to build $P_2$ by descending the derivation tree for $\I_1\vdash P_1$
and by substituting  subtrees of $\I_1\vdash P_1$ with a leaf 
$\I_2\vdash \INACT$ by following two rules:
\begin{description}
 \item{\tinc} $\I_1,x\colon\cotype{\lin\IN TS}{\lin\OUT T\dual S}\vdash
\RECEIVE xyQ$  is exchanged with $\I_2\vdash\INACT$;
 \item{\toutc} $\I_1,x\colon\cotype{\lin\OUT TS}{\lin\IN T\dual S}\vdash
\SEND xvQ$   is exchanged with  $\I_2 \vdash \INACT$; 
\end{description} 
Ideally, we would let $\I_1=\I_2$. Unfortunately, the linear design of the algorithm
forbids this option since the call of the type checking function would return in
output the linear entries not consumed by \ainact. 
This approach indeed works if we relax the linearity of type checking an relies on an
affine setting where each session type is used at most once.
Otherwise, we could prune the linear entries from $\I_1,x\colon\cotype{\lin\IN
TS}{\lin\OUT T\dual S}$ and let the type environment $\I_2$ to contain 
all unrestricted typings in $\I_1$.  The proof is performed by proceeding by
induction while exploiting bisimulation semantics and contextuality of~$\cong$.

As a by-product, this technique could be also useful to detect simple deadlocks
generated by erroneous programming of two opposite linear capabilities in a
sequential way.

\section{Examples}
\label{sec:examples}

The protocol for the scheduling of a meeting discussed in Section
\ref{sec:intro} requires the interaction
 with one or more clients executed in parallel with the service. 
 The bootstrap is due to the interaction with a client process acting as the
 creator of the poll, defined as process $P_2$ below. 
 The process, once it has received the channel for the poll,
sets the title and the date and then sends the invitation for the poll to a
number of recipients by forwarding the channel established to communicate the
date proposals.
 An instance of the protocol is obtained by
 considering the parallel composition of the service $P_1$ and the client $P_2$;
 we let ${\sf string}=\un\,\End= {\sf date}$. 
\begin{align*}
 P_1 &=!x(w).\NR{p:\cotype{S_1}{S_2}}\ (\SEND wp p({\sf title}).p({\sf
date}).!p({\sf date}))
 \\[2mm]
 P_2&=\SEND x{y} y(p).(\SEND p{\sf Meeting}
\SEND p{\sf 17 March}(\SENDn {z_1}{p} \PAR \cdots \PAR\SENDn {z_n}{p} ))
\\[2mm] 
S_1&=\lin\,\IN {\sf string}\lin\,\IN{\sf date}S_3 
\qquad S_3=\un\,\IN{\sf date}S_3
 \\[2mm]
S_2&= \lin\,\OUT {\sf string}\lin\,\OUT{\sf date}S_4 
\qquad S_4 =\un\,\OUT{\sf date}S_4 
\end{align*}

\noindent
By passing the (safe) context $\Gamma$ below to the type checker we obtain that $P_1
\PAR P_2$ is accepted. Notice that, due to Lemma~\ref{lem:congruence},
$P_2 \PAR P_1$ is also accepted; we believe this feature to be of practical
interest. 
For the sake of compactness, in the following we will 
shorten the unrestricted type $\un\,\End$ with $\End$.  
\begin{align*}
&\Gamma=
   x: T_x,  
   y:\cotype{\lin\,\OUT{S_2}\End}{\lin\,\IN{S_2}\End},
   z_1 : \lin\,\OUT{S_4}\End,\dots,z_n:\lin\,\OUT{S_4}\End
 \\[2mm] 
 & T_x=\cotype{\mu a.\un \IN {(\lin\,\OUT{S_2}\End)}a}{\mu b.\un \OUT
              {(\lin\,\OUT{S_2}\End)}b}  
\end{align*}
We present below the most interesting snippets of the execution of 
${\sf typeCheck}(\ \Gamma\, ,\, P_1 \PAR P_2\ )$.

\medskip\noindent
{\bf Typing the (linear) poll delegation}. 
In typing the continuation of $P_1$, the \ares pattern is matched.
Once verified that the type $\cotype{S_1}{S_2}$ is balanced, 
the following sub-call is launched by adding to the context the channel type for
the poll:
\begin{equation}
\Gamma_1={\sf check}(\ \Gamma,w:\lin\,\OUT{S_2}\End,p:\cotype{S_1}{S_2}\ , 
\ \SEND wp p({\sf title}).p({\sf date}).!p({\sf date}) \ )   
\label{ex:ares}
\end{equation}
The call (\ref{ex:ares}) matches the pattern \aoutl and
a call for the continuation is invoked by setting   to void
the sent end point type $S_2$.
\begin{equation}
\Gamma_2={\sf check}(\ \Gamma,w: \End ,p:\cotype{S_1}{\circ}\ , 
\ p({\sf title}).p({\sf date}).!p({\sf date}))  
\label{ex:outl}
\end{equation}

\noindent
When receiving the  context $\Gamma_2$, the pattern \aoutl requires  
$\Gamma_2(w)$ to be unrestricted. The context returned in output to the call in
(\ref{ex:ares}) is obtained by setting $\Gamma_2(w)=\circ$.
When receiving the context $\Gamma_1$, the pattern \ares requires
$\Gamma_1(p)$ to be unrestricted, and the context returned in output to the 
caller is obtained by  removing the entry for $p$ from $\Gamma_1$.

\medskip\noindent
{\bf Typing the replicated receiving of the date}.
In typing the continuation of the  process above the pattern \ainl  is
matched  and the following call is launched by passing as parameter the context 
$\Gamma'=\Gamma,w: \End ,p:  {S_3}  ,{\sf title:string},{\sf date:date}$
:
\vspace{-.5em}
\begin{align}
&\Gamma_3={\sf check}(\ \Gamma'\ \ , \ !p({\sf date}) \ )  
\label{ex:aincll2}	
\end{align}
The pattern \arepl is matched and   the following call is launched.
\begin{align}
&\Gamma_4={\sf check}(\ \Gamma' ,\  p({\sf date})\ )  
\label{ex:ainun} 	
\end{align}

\noindent
To succeed in returning the context in output, \arepl requires the
context  $\Gamma_4$ received from the call~(\ref{ex:ainun}) to be equal to
$\Gamma'$. This is satisfied; in this way we know that any linear resource
has not been used under replication, because that would have implied the presence of
a new void typing.  Finally
the context $\Gamma_3=\Gamma_4$ is returned by \arepl to the caller.

\medskip\noindent
{\bf Typing the (unrestricted) poll delegation}.
In typing the continuation of the client $P_2$, pattern \aoutl is
matched and the following call is launched by passing 
as argument  the context
$\Gamma_3=x:T_x,y:\cotype{\circ}{\End}, p: S_4 ,z_1 :
\lin\,\OUT{S_4}\End,\dots,z_n:\lin\,\OUT{S_4}\End$:
\begin{align}
&\Gamma_5={\sf check}(\Gamma_3\,,\, \SENDn {z_1}{p} \PAR \cdots \PAR\SENDn
{z_n}{p})
\label{ex:apar2}
\end{align} 
The call ({\ref{ex:apar2}) matches the \apar pattern and corresponds
to the forwarding of the poll to the recipients in order to propose a date.
The checking procedure for the first delegation is invoked:  
\begin{align}
&\Gamma_6={\sf check}(\Gamma_3\,,\, \SENDn {z_1}{p})
\label{ex:apar2bis}
\end{align} 
The context $\Gamma_6$ obtained by setting to void the entry for $z_1$ in
$\Gamma_3
$ is returned to the caller (\ref{ex:apar2}) in order to type the next thread.
Lastly context $\Gamma_5$ is  obtained by setting to void the entries for
$z_1,\dots,z_n$ in $\Gamma_3$. 

\begin{remark}
By setting typings to void at the end of
the call for  a linear typing we avoid unsound derivations as the one
below
\[ 
\Gamma_1,x:\lin \OUT {T} \un \IN {T}{S}\stackrel{?}{\vdash} 
\SEND xvP \PAR  \RECEIVE xyQ\ret \Gamma_2,x:\circ \ . 
\] 
 
\noindent
On contrast,  we could type check a standard  use of pi calculus channels by
using the rules for unrestricted channel types of the form 
$T=\cotype{\mu a.\un\,\IN {T'} a}{\mu b.\un\,\OUT {T'} b}$:
\[
\Gamma_1,x:T\vdash \SEND xyP \PAR  \RECEIVE xyQ\ret\Gamma_2,x:T \ .
\]
\end{remark}

\section{Discussion}
\label{sec:discussion}
We have presented a type checking algorithm for establishing a session-based
discipline in (a typed version of) the pi calculus of Milner, Parrow and Walker. 
Following a recent  approach~\cite{vvBeatcs}  
our session types are qualified as linear
or unrestricted; a linear session type could evolve to an unrestricted session
type. Each session type describes one end of the session; the whole session is
described  by a type constructor representing the concurrent behavior of the
two channel ends~\cite{GiuntiV10}.
We assess the soundness of the algorithm by showing that type checked
processes are accepted by a typing system satisfying subject reduction. 

Similarly to other approaches for type checking of linear and session types in the
pi calculus~\cite{kpt:linearity,GaySJ:substp}, we rely on the idea to type a
parallel process $P\PAR Q$ by ignoring the set of linear identifiers used
by $P$ before  type checking~$Q$. By reasoning at the type level, we  
provide for a clean  account of the notion of used identifier by introducing explicit
markers for consumed types. On contrast with the cited approaches, this construction 
let us prove that the algorithm preserves structural congruence, and in turn that
re-arranging of parallel processes is possible; we think that this feature  is of
practical interest.

While the algorithm is not complete, we claim that we are not loosing
expressiveness since the algorithm should type checks all interesting
processes accepted by the split-based typing system. 
We are working on a proof of this result which is based on a typed
observational theory  which permits to contrast the behavior of
processes w.r.t. contexts regulated by type checking~\cite{bisim-sessions}.

Qualified session types are expressive enough to represent  linear
types for lambda
calculus~\cite{walker:substructural-type-systems} and linear and session types
for pi calculus~\cite{kpt:linearity,GaySJ:substp}:
see~\cite{GiuntiV10} for the details.
The presented algorithm is therefore a useful tool to type check systems based
on the notion of linearity of communications.
For instance, the qualified session typing system presented in~\cite{vvsfm09}
for a variant of pi calculus relies on the idea of a double binder to represent
the two ends of a communication. By projecting a qualified session type $S$ into
its dual $\dual S$ we could easily map this construct in our system and in turn
provide a (different) type checking algorithm:
\[
\map {\NR{xy\colon S}P} = \NR{x\colon\cotype{S}{\dual S}}\map{P\subs xy}
\]

It should be noted that the choice of representing computations with a channel
type representing the two ends of the communication rules out some process that
could be interesting. A  process that we are not able to type check is
below.
\begin{align*}
!x(y).\NR{a}(\SEND ya a({\sf title}).a({\sf date}).(!a({\sf date}) \PAR 
\SENDn{a}{{\sf 22March}}) 
\end{align*}
\noindent
The process consists in a modified
version of the poll service where the service itself  proposes a date for
the meeting.  Both the algorithm and the split-based system do  not accept this
process because in the (unrestricted) continuation type  both capabilities would be
needed. While we do not envisage difficulties in introducing subtyping for
unrestricted types  \`a la~\cite{PS96}, this seems to
go in the opposite direction of the idea of channel types. 
We therefore need to investigate  subtyping solutions which take into
account the channel type construct. 

Lastly, a natural completion of this work would be to deploy an algorithm for type
inference. We are convinced that the channel type abstraction  leads to a feasible
implementation based on constraint techniques (e.g.~\cite{mezzina-forte09}).

\begin{sloppypar}
  \paragraph{Acknowledgments.}
The author would like to thank the anonymous referees for  
detailed comments. This work is supported by the ERCIM ABCDE
Programme and by the Comete project, INRIA Saclay-\^Ile de France.
\end{sloppypar}

\bibliography{newsessions}  
\bibliographystyle{eptcs}
 
\appendix
\section{Appendix}
The table in Figure~\ref{fig:tableaux} depicts the shape of contexts used in 
the proof of the case of congruence of parallel processes in 
Lemma~\ref{lem:congruence}.
The first three columns in the table represent all
possible combinations for (an entry of) safe contexts $\Gamma_1,\Gamma_2$ and
$\Gamma_3$ such that
\[ 
\Gamma_1\vdash P\ret\Gamma_2 
\ \text{ and } \ 
\Gamma_2\vdash Q\ret\Gamma_3 
\]
Given these inputs, the next three columns show the output for the
context in the header. 
Context $\Gamma_4$ in the seventh column is the solution  of the following linear
system:
\[
\left\{ 
\begin{aligned}
\Gamma_1&=(\Gamma_2\ret\Gamma_3)\uplus\Gamma_4
\\
\Gamma_4&=(\Gamma_1\ret\Gamma_2)\uplus\Gamma_3 
\end{aligned}
\right.
\]
\noindent
In the last column 
we have the environment $\nabla_{\Gamma_1}=\nabla=\nabla_{\Gamma_2}$.
\afterpage{\clearpage}
\begin{figure}[!ht]
\begin{small}
\[
\begin{array}{|c|c|c|c|c|c|c|c|c|}
\hline
\Gamma_1 &\Gamma_2 &\Gamma_3  &\Gamma_1\ret\Gamma_2    & \Gamma_2\ret\Gamma_3
&\Gamma_1\ret\Gamma_3 &\Gamma_4  &\nabla  
\\
\hline 
\lin\,p &\lin\,p &\lin\,p &\circ   & \circ &\circ & \lin\,p &
\circ \\[2mm]
\lin\,p &\lin\,p & \circ &\circ &\lin\,p &\lin\,p  &\circ &\circ
\\[2mm]
\lin\,p &\circ &\circ &\lin\,p &\circ &\circ &\lin\,p &\circ 
\\[2mm]
\un\,p &\un\,p &\un\,p &\un\,p &\un\,p &\un\,p &\un\,p &\un\,p
\\[2mm]
\circ &\circ &\circ &\circ &\circ &\circ &\circ  &\circ
\\[2mm]
\cotype{\lin p_1}{\lin p_2} & \cotype{\lin p_1}{\lin p_2}
&\cotype{\lin p_1}{\lin p_2} &\cotype{\circ}{\circ} 
&\cotype{\circ}{\circ} &\cotype{\circ}{\circ}  
&\cotype{\lin p_1}{\lin p_2}
&\cotype{\circ}{\circ} 
\\[2mm]
\cotype{\lin p_1}{\lin p_2} & 
\cotype{\lin p_1}{\lin p_2} &
\cotype{\lin p_1}{\circ} &
\cotype{\circ}{\circ} &
\cotype{\circ}{\lin p_2} & 
\cotype{\circ}{\lin p_2} & 
\cotype{\lin p_1}{\circ} &
\cotype{\circ}{\circ}  
\\[2mm]
\cotype{\lin p_1}{\lin p_2} & 
\cotype{\lin p_1}{\circ} &
\cotype{\lin p_1}{\circ} &
\cotype{\circ}{\lin p_2} &
\cotype{\circ}{\circ} & 
\cotype{\circ}{\lin p_2} & 
\cotype{\lin p_1}{\lin p_2} &
\cotype{\circ}{\circ}  
\\[2mm] 
\cotype{\lin p_1}{\lin p_2} & 
\cotype{\lin p_1}{\lin p_2} &
\cotype{\circ}{\lin p_2} &
\cotype{\circ}{\circ} &
\cotype{\lin p_1}{\circ} & 
\cotype{\lin p_1}{\circ} & 
\cotype{\circ}{\lin p_2} &
\cotype{\circ}{\circ}  
\\[2mm]
\cotype{\lin p_1}{\lin p_2} & 
\cotype{\circ}{\lin p_2} &
\cotype{\circ}{\lin p_2} &
\cotype{\lin p_1}{\circ} &
\cotype{\circ}{\circ} & 
\cotype{\lin p_1}{\circ} & 
\cotype{\lin p_1}{\lin p_2} &
\cotype{\circ}{\circ}  
\\[2mm]
\cotype{\lin p_1}{\lin p_2} & 
\cotype{\lin p_1}{\lin p_2} &
\cotype{\circ}{\circ} &
\cotype{\circ}{\circ} &
\cotype{\lin p_1}{\lin p_2} &
\cotype{\lin p_1}{\lin p_2} &
\cotype{\circ}{\circ} &
\cotype{\circ}{\circ}
\\[2mm]
\cotype{\lin p_1}{\lin p_2} & 
\cotype{\lin p_1}{\circ} &
\cotype{\circ}{\circ} &
\cotype{\circ}{\lin p_2} &
\cotype{\lin p_1}{\circ} &
\cotype{\lin p_1}{\lin p_2} &
\cotype{\circ}{\lin p_2} &
\cotype{\circ}{\circ}
\\[2mm]
\cotype{\lin p_1}{\lin p_2} & 
\cotype{\circ}{\lin p_2} &
\cotype{\circ}{\circ} &
\cotype{\lin p_1}{\circ} &
\cotype{\circ}{\lin p_2} &
\cotype{\lin p_1}{\lin p_2} &
\cotype{\lin p_1}{\circ} &
\cotype{\circ}{\circ}
\\[2mm]
\cotype{\lin p_1}{\lin p_2} & 
\cotype{\circ}{\circ} &
\cotype{\circ}{\circ} &
\cotype{\lin p_1}{\lin p_2} &
\cotype{\circ}{\circ} &
\cotype{\lin p_1}{\lin p_2} &
\cotype{\lin p_1}{\lin p_2} &
\cotype{\circ}{\circ}
\\[2mm]
\cotype{\lin p_1}{\circ} & 
\cotype{\lin p_1}{\circ} &
\cotype{\lin p_1}{\circ} &
\cotype{\circ}{\circ} &
\cotype{\circ}{\circ} &
\cotype{\circ}{\circ} &
\cotype{\lin p_1}{\circ} &
\cotype{\circ}{\circ}
\\[2mm]
\cotype{\lin p_1}{\circ} & 
\cotype{\lin p_1}{\circ} &
\cotype{\circ}{\circ} &
\cotype{\circ}{\circ} &
\cotype{\lin p_1}{\circ} &
\cotype{\lin p_1}{\circ} &
\cotype{\circ}{\circ} &
\cotype{\circ}{\circ}
\\[2mm]
\cotype{\lin p_1}{\circ} & 
\cotype{\circ}{\circ} &
\cotype{\circ}{\circ} &
\cotype{\lin p_1}{\circ} &
\cotype{\circ}{\circ} &
\cotype{\lin p_1}{\circ} &
\cotype{\lin p_1}{\circ} &
\cotype{\circ}{\circ} 
\\[2mm]
\cotype{\circ}{\lin p_1} & 
\cotype{\circ}{\lin p_1} &
\cotype{\circ}{\lin p_1} &
\cotype{\circ}{\circ} &
\cotype{\circ}{\circ} &
\cotype{\circ}{\circ} &
\cotype{\circ}{\lin p_1} &
\cotype{\circ}{\circ}
\\[2mm]
\cotype{\circ}{\lin p_1} & 
\cotype{\circ}{\lin p_1} &
\cotype{\circ}{\circ} &
\cotype{\circ}{\circ} &
\cotype{\circ}{\lin p_1} &
\cotype{\circ}{\lin p_1} &
\cotype{\circ}{\circ} &
\cotype{\circ}{\circ}
\\[2mm]
\cotype{\circ}{\lin p_1} & 
\cotype{\circ}{\circ} &
\cotype{\circ}{\circ} &
\cotype{\circ}{\lin p_1} &
\cotype{\circ}{\circ} &
\cotype{\circ}{\lin p_1} &
\cotype{\circ}{\lin p_1} &
\cotype{\circ}{\circ}
\\[2mm]
\cotype{\un p_1}{\un p_2} &
\cotype{\un p_1}{\un p_2} &
\cotype{\un p_1}{\un p_2} &
\cotype{\un p_1}{\un p_2} &
\cotype{\un p_1}{\un p_2} &
\cotype{\un p_1}{\un p_2} &
\cotype{\un p_1}{\un p_2} &
\cotype{\un p_1}{\un p_2}
\\[2mm]
\cotype{\un p_1}{\circ} &
\cotype{\un p_1}{\circ} &
\cotype{\un p_1}{\circ} &
\cotype{\un p_1}{\circ} &
\cotype{\un p_1}{\circ} &
\cotype{\un p_1}{\circ} &
\cotype{\un p_1}{\circ} &
\cotype{\un p_1}{\circ}  
\\[2mm]
\cotype{\circ}{\un p_2} &
\cotype{\circ}{\un p_2} &
\cotype{\circ}{\un p_2} &
\cotype{\circ}{\un p_2} &
\cotype{\circ}{\un p_2} &
\cotype{\circ}{\un p_2} &
\cotype{\circ}{\un p_2} &
\cotype{\circ}{\un p_2}  
\\[2mm]
\cotype{\circ}{\circ} &
\cotype{\circ}{\circ} &
\cotype{\circ}{\circ} &
\cotype{\circ}{\circ} &
\cotype{\circ}{\circ} &
\cotype{\circ}{\circ} &
\cotype{\circ}{\circ} &
\cotype{\circ}{\circ}
\\ 
\hline
\end{array}
\]
\end{small}
\caption{Contexts in \apar}
\label{fig:tableaux}
\end{figure}

\end{document}